\documentclass[11pt]{article}
\usepackage[margin=1in]{geometry}
\usepackage[T1]{fontenc}
\usepackage{lmodern}
\usepackage{amsmath,amssymb,amsthm,mathtools}
\usepackage{microtype}
\usepackage[numbers,sort&compress]{natbib}
\usepackage[colorlinks=true,linkcolor=blue,citecolor=blue,urlcolor=blue]{hyperref}
\hypersetup{pdftitle={Almost-Linear Decremental Single-Source Distance Estimates in Directed Graphs via Cut Balance},pdfauthor={Hanqing Li, Peking University}}
\newtheorem{theorem}{Theorem}[section]
\newtheorem{lemma}[theorem]{Lemma}
\newtheorem{proposition}[theorem]{Proposition}
\newtheorem{corollary}[theorem]{Corollary}
\theoremstyle{definition}

\newcommand{\R}{\mathbb R}
\newcommand{\one}{\mathbf 1}
\newcommand{\eps}{\varepsilon}
\newcommand{\norm}[1]{\lVert#1\rVert}
\newcommand{\clip}{\operatorname{clamp}}
\newcommand{\polylog}{\operatorname{polylog}}
\newcommand{\calL}{\mathcal L}
\newcommand{\calI}{\mathcal I}
\newcommand{\calJ}{\mathcal J}
\newcommand{\calV}{\mathcal V}
\DeclareMathOperator{\dist}{dist}
\DeclareMathOperator{\cut}{cut}
\title{Almost-Linear Decremental Single-Source Distance Estimates\newline in Directed Graphs via Cut Balance}
\author{Hanqing Li\\Peking University}
\date{October 5, 2026}
\begin{document}
\maketitle
\begin{abstract}
We give a deterministic reduction for maintaining simultaneous approximate
single-source distance estimates in online decremental directed graphs.
For positive integer weights in $[1,W]$, the algorithm maintains an explicit
array of integer $(1+\eps)$ upper estimates, identifies unreachable vertices
exactly, and answers numerical queries in constant time. Let $N=m+n$,
$S=N+\lceil1/\eps\rceil$, and assume $\log W=\polylog(S)$. Initialization
and all updates take
$O(\Delta)+(N+\Delta_{\rm eff}+\calL/\eps)S^{o(1)}$ time, where
$\Delta$ counts raw updates, $\Delta_{\rm eff}=O(m(1+\log W)/\eps)$ counts
filtered updates, and $\calL$ measures finite distance growth weighted by
the current indegrees of the reachable subgraph. Removed edges and vertices
that become unreachable incur no subsequent growth charge. Consequently,
the worst-case bound is $O(\Delta)+(N/\eps)S^{o(1)}$, which is almost linear
for subpolynomial inverse accuracy. The reduction uses the dynamic
minimum-ratio cut and exact reachability algorithms of van den Brand
et al. (FOCS 2024). A degree-weighted clipped logarithmic potential turns
constant relative cut balance into accuracy at every target; a successor
cut then restores strict feasibility. A distance warm start and the energy
released by decreasing return coefficients yield the refined growth bound.
Return mass also amplifies the accuracy of the existing change detector.
The public estimates can be monotone, with
$O(n+\calJ/\eps)$ array writes for an unweighted finite-growth measure
$\calJ\le\calL$. Execution uses rational arithmetic. A separate reduction
extends the worst-case guarantee to nonnegative integer weights. The
algorithm maintains numerical estimates and does not provide fast path
reporting.
\end{abstract}

\section{Introduction}
The decremental single-source shortest-path problem asks for distances
from a fixed source while edges are deleted or their weights increase.
In directed graphs, the classical Even--Shiloach bound is $O(mn)$ for
unweighted graphs \citep{ES81}. Henzinger, Krinninger, and Nanongkai gave
the first polynomial improvements for directed reachability and
approximate distances \citep{HKN14,HKN15}. Subsequent work obtained faster
algorithms against adaptive adversaries \citep{PW20}, near-quadratic total
time in dense weighted digraphs \citep{BPW20}, and deterministic algorithms
using directed expanders and congestion balancing \citep{BGS20}.
Deterministic almost-linear decremental distance maintenance is known
for undirected graphs \citep{BPS21}.

The dynamic duality framework of \citet{Duality24} provides almost-linear
decremental single-pair distance estimates, exact single-source
reachability, and strongly connected components. Its distance application
does not maintain estimates for all targets simultaneously. In the
opposite monotone direction, \citet{Liu26} gives deterministic almost-linear
incremental directed SSSP, including path reporting, through a modified
interior point method. Offline algorithms can reverse an update sequence
\citep{GK25,MNNS25}; an online decremental algorithm cannot use future
updates in this way.

We obtain simultaneous numerical estimates in general online decremental
digraphs by using one shared potential and one dynamic cut instance.
The geometric ingredient is a certificate that converts constant relative
cut balance into relative accuracy at \emph{every} target. Its return
constraints are weighted by current core indegrees. This weighting both
controls backward region growth and makes the existing edge-change
detector sufficiently sensitive to each vertex label. Clipping permits
temporarily infeasible labels while keeping all capacities bounded;
a successor-cut argument proves that no constraint remains clipped at
termination.

The running time also reflects the update sequence. When an edge is
removed, its contribution to the return coefficient can be removed too.
The released energy cancels the corresponding change in a reference
potential. An additional exact reachability instance lets us remove an
entire vertex as soon as it becomes unreachable. The resulting budget
charges distance growth only to vertices and incoming edges that survive
the current update.

\subsection{Model and main result}
Let $G_0$ be a simple directed graph with $n\ge2$ vertices, $m$ edges,
source $s$, and positive integer weights. A known bound $W\ge1$ applies
to all weights throughout the sequence. Updates delete an existing edge
or increase its weight. Positive self-loops can be discarded on reading
the input. Write $G_i$ for the graph after the $i$th raw update,
$0\le i\le f=\Delta$, and let
\[
 d_i(v)=\dist_{G_i}(s,v),\qquad
 U_i=\{v:d_i(v)<\infty\},\qquad
 N=m+n,\quad S=N+\lceil1/\eps\rceil.
\]
Here $0<\eps\le1$, and $U_i\subseteq U_{i-1}$. All queries are answered
after initialization or an external update has finished its internal repair.
The guarantee is on total update time, not on the worst-case time of one
update.

We use a wide-word model: an edge identifier, an input weight, and an
$O(\log(nW))$-bit distance estimate each occupy a constant number of input
or output words. Reading, comparison, bit length, shifts, multiplication,
and integer division on these words take constant time. Internal arithmetic on
$\polylog(S)$-bit numbers is charged at its bit complexity and absorbed
in $S^{o(1)}$. In a bit-input or bit-output model, the corresponding
input and output lengths must instead be charged. The accuracy parameter
is processed once at initialization.

For $v\ne s$, define the current return mass
\begin{equation}\label{eq:live-degree}
 q_i(v)=
 \begin{cases}
  1+\deg^-_{G_i[U_i]}(v),&v\in U_i\setminus\{s\},\\
  0,&v\notin U_i,
 \end{cases}
 \qquad q_i(s)=0.
\end{equation}
The extra unit corresponds to a permanent artificial source edge used
below. Each $q_i(v)$ is nonincreasing, and
$\sum_vq_i(v)\le M:=m+n-1$.
Define the finite-growth measure
\begin{equation}\label{eq:live-growth}
 \calL=\sum_{i=1}^{f}\ \sum_{v\in U_i\setminus\{s\}}
        q_i(v)\log\frac{d_i(v)}{d_{i-1}(v)}.
\end{equation}
Both distances in each term are finite and positive. In particular,
vertices that lose reachability in update $i$ are absent from its sum.
Logarithms without a subscript are natural. This quantity is used only
in the analysis; the algorithm does not maintain exact distances after
its warm start.

\begin{theorem}\label{thm:main}
For every fixed $c>0$, if $W\le2^{(\log S)^c}$, there is a deterministic
data structure maintaining an explicit integer array $D$ with
\[
 d_i(v)\le D_i(v)\le(1+\eps)d_i(v)\quad(v\in U_i),\qquad
 D_i(v)=\infty\quad(v\notin U_i),\qquad D_i(s)=0.
\]
Numerical queries take $O(1)$ time in the wide-word model. Initialization
and all updates take
\begin{equation}\label{eq:main-time}
 O(\Delta)+\bigl[N+\Delta_{\rm eff}+\calL/\eps\bigr]S^{o(1)},\qquad
 \Delta_{\rm eff}\le
 \min\{\Delta,O(m(1+\log W)/\eps)\}.
\end{equation}
The public estimates can be nondecreasing. The algorithm uses the
deterministic dynamic minimum-ratio cut data structure and the explicit
decremental reachability algorithm of \citet{Duality24}, and works against
an adaptive adversary.
\end{theorem}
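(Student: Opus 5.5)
The proof is a reduction with four layers, assembled in this order.

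\emph{Reachability and filtering.} An exact decremental reachability instance from \citet{Duality24} maintains $U_i$. Whenever a vertex leaves $U_i$, we set $D(v)=\infty$ and delete it together with its incident edges from every other structure. This is why $q_i(v)=0$ afterwards and why unreachable vertices incur no further growth charge. Weight increases are filtered: each edge $e$ keeps a rounded weight $\tilde w(e)$, a power of $(1+\eps/4)$ times the initial weight, and an increase is forwarded only when the true weight exceeds the current rounded value. Each edge then generates $O(\log_{1+\eps}W)=O((1+\log W)/\eps)$ forwarded updates, which gives $\Delta_{\rm eff}$. Raw updates cost $O(\Delta)$.

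\emph{Certificate.} We keep labels $\phi(v)$ intended to approximate $\log d(v)$, clipped to $[0,\log(nW)]$. Constraints are added in two families. Forward edge constraints $\phi(v)-\phi(u)\le\log(\text{ratio})$ certify $D$ as an upper bound via path sums. Return constraints have weight $q_i(v)$, consisting of an artificial source edge plus one per incoming core edge. Their capacities are degree-weighted clipped logarithmic potentials, and they are placed in a single dynamic minimum-ratio cut instance.

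The key geometric lemma, which I take as the paper's cut-balance certificate, says the following. If no cut has ratio below a constant $\alpha$, then for every target $v$ the set $\{u:\phi(u)<\phi(v)-\eps\}$ is a cut whose return mass dominates its forward slack. Integrating over thresholds, this gives $D(v)\le(1+\eps)d(v)$ at every target simultaneously, not merely on average. The degree weighting matters here: a vertex with large indegree has proportionally large return mass, so the existing edge-change detector notices a violation at $v$ after only $O(\eps q(v))$ units of potential change.

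Whenever the detector reports a cut of small ratio, we raise the labels on its sink side by $\Theta(\eps)$, which is a multiplicative-weights step. Because of clipping, some constraints may be temporarily infeasible. The successor-cut lemma handles this: take the set of vertices whose clipped constraint is tight together with all their successors. This set is itself a violating cut, so at termination no constraint is clipped, and the integer array $D(v)=\lceil e^{\phi(v)}\rceil$ with monotone max-updates is strictly feasible.

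\emph{Running time.} The bound comes from a potential $\Phi=\sum_v q_i(v)\phi(v)$, initialized by a distance warm start (one exact Dijkstra at time $0$), so $\Phi_0\le\sum_v q_0(v)\log d_0(v)$. Each repair step costs $S^{o(1)}$ per unit increase of $\Phi/\eps$. We compare $\Phi$ with the reference $\Psi_i=\sum_v q_i(v)\log d_i(v)$. When an edge is removed, $q$ decreases, and the energy released, $\phi(v)$ times the drop in $q(v)$, exactly cancels the corresponding drop in $\Psi$. The increase of $\Psi$ not explained by changes in $q$ is therefore at most the sum of $q_i(v)\log(d_i/d_{i-1})$ over surviving vertices, which is $\calL$. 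Since $\phi$ stays within $\eps$ of $\log d$, the total growth of $\Phi$ is $O(\calL)$ plus the warm-start term $O(N)$. Dividing by $\eps$ gives \eqref{eq:main-time}, with the cut-structure overhead $(N+\Delta_{\rm eff})S^{o(1)}$ added. Determinism and adaptivity follow because both black boxes from \citet{Duality24} are deterministic. The bound $\log W=\polylog S$ keeps all numbers at $\polylog(S)$ bits.

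\textbf{Main obstacle.} I expect the hardest step to be the per-target accuracy lemma combined with the successor-cut feasibility argument. Converting a constant cut ratio into pointwise $(1+\eps)$ accuracy requires the threshold-integration argument to use return mass $q(v)\ge1$ at the target itself. This is where the artificial source edge is essential, and I would try it first via a layered-threshold (coarea) argument over level sets of $\phi$.
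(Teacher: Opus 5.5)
Your bookkeeping is sound: telescoping $\Phi=\sum_v q_i(v)\phi(v)$ against $\sum_v q_i(v)\log d_i(v)$, with the released $\phi(v)\lambda_i(v)$ cancelling the dropped reference mass, is essentially the paper's energy credit. The algorithm this bookkeeping is supposed to bound, however, is missing.

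\textbf{The repair step and its accounting.} You never say which gradient $\gamma$ and capacities $u$ are given to the min-ratio cut oracle. Your constraints $\phi(v)-\phi(u)\le\log(\text{ratio})$ also cannot be linear difference constraints, since the real constraint $e^{\phi(v)}\le e^{\phi(u)}+w$ is not one. The primitive only acts through $By\leftarrow By+\tau B\one_C$ with $\tau\le1/U$, and its report bound holds only for such small toggles. So a $\Theta(\eps)$ multiplicative raise is not one operation. Nothing ensures that the sink side of a negative-ratio cut consists of underestimated vertices, so monotone raising can overshoot $\log d$. Once labels must come down, counting cost ``per unit increase of $\Phi/\eps$'' no longer bounds the work.

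The paper instead uses a convex objective:
\begin{itemize}
\item labels $p_v\approx d(v)/Q$ that are linear in distance;
\item the barrier $\Psi=\sum_e\mu_e\psi_\kappa(\sigma_e)$ on slacks $\sigma_e=c_e+p_u-p_v$, with $\mu_e=1$ on core edges and $\mu_e=Aq(v)$ on the zero-cost return $v\to s$; the clipping applies to the barrier's argument, not to labels;
\item oracle input $\gamma=\nabla\Psi$ and $u_e=\mu_e/\max\{\sigma_e,\kappa\}$;
\item a descent lemma: each toggle with $\tau\approx1/(100KU)$ lowers $\Psi$ by at least $1/(2000K^2)$, even though intermediate core slacks may be negative.
\end{itemize}
Your $\Phi$ corresponds, up to the factor $-A$, only to the return part $A\sum_vq(v)\log(1/p_v)$ of this energy. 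The nonnegative core part, initially at most $M\log(Q/\delta)$, pays for repairs with no distance growth, for example after deletions that unbalance cuts without changing distances. That pool carries no $1/\eps$, which is the point of the refined bound. As written, dividing your $O(N)$ warm-start term by $\eps$ gives $N/\eps$, not $N\,S^{o(1)}$.

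\textbf{The certificate and the output.} The certificate you invoke does not transfer to your setting. The paper's all-target lemma does not use label level sets.
\begin{itemize}
\item It grows backward balls from the target $t$ in the metric $\ell_e=1/f_e=\max\{\sigma_e,\kappa\}\ge\sigma_e$, with volume counted by current core indegrees. Cut balance gives $I(R)\ge b(R)/2$, and $b_u=Aq(u)/p_u\ge Aq(u)/((1+\delta)p_t)$ inside the ball. Hence $V_t'/V_t\ge A/(4(1+\delta)p_t)$, and integrating over $[0,\delta p_t]$ forces $s$ into the ball. This produces a path of cost at most $(1+\delta)p_t$ without assuming feasibility.
\item Only then does the successor cut apply: take $v$ and its successors in the DAG of edges with $f_e\ge2Q$, which is acyclic because core costs are at least $1/Q$. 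With the return bound $b(R)\le3AMQ$ from the first step, this caps every capacity at $10AMQ<1/\kappa$.
\end{itemize}
Both steps rely on capacities being inverse slacks of linear constraints.

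Feasibility telescoped along a shortest path gives the \emph{lower} bound $p_v\le z_v$, not an upper bound as you state. So $D(v)=\lceil e^{\phi(v)}\rceil$ points the wrong way; the paper outputs $F_v=(1+\delta)Q\bar p_v/(1-\rho)$. Finally, labels are implicit in the cut structure, so an explicit array needs a way to learn which coordinates moved without scanning all $n$ vertices. The paper uses return-group snapshots, whose relative error $\rho=4\xi/A$ is shrunk by the return coefficient $A$. It adds a dirty list and the commit rule $D=\lfloor(1+\beta)F_v\rfloor$. Your plan has no such mechanism.
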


For each initially reachable $v\ne s$, let $d_v^{\rm last}$ be its
distance in the last external state in which it is reachable. Set
\begin{equation}\label{eq:write-growth}
 \calJ=\sum_{v\in U_0\setminus\{s\}}
          \log\frac{d_v^{\rm last}}{d_0(v)}.
\end{equation}
Since every surviving vertex has $q_i(v)\ge1$, telescoping until its
last reachable state gives $\calJ\le\calL$.
Section~\ref{sec:output} proves that the array in Theorem~\ref{thm:main}
can be maintained with $O(n+\calJ/\eps)$ writes to public distance cells.
This counts public outputs separately from internal cache updates.

Because finite distances are at most $(n-1)W$ and the total initial
return mass is at most $M$, we have $\calL\le M\log(nW)$.
Thus~\eqref{eq:main-time} implies the worst-case bound
$O(\Delta)+(N/\eps)S^{o(1)}$.
For $\eps^{-1}=N^{o(1)}$ it is $O(\Delta)+N^{1+o(1)}$.
If $\calL=0$, the refined bound is
$O(\Delta)+(N+\Delta_{\rm eff})S^{o(1)}$.
For deletions alone, $\Delta_{\rm eff}\le m$, so even a fixed polynomial
inverse accuracy permits $N^{1+o(1)}$ total time when no surviving target
has finite distance growth. Arbitrary weight-increase sequences still
incur their effective-update cost.

\subsection{A version without a separate reachability instance}
There is a simpler variant keeping every vertex in the optimization
problem. Set $H=8nW$ and
\[
 C_v^i=\min\{d_i(v),H\},\qquad
 q_i^+(v)=1+\deg^-_{G_i}(v)\quad(v\ne s),
\]
where an infinite distance is truncated to $H$. Define
\begin{equation}\label{eq:unpruned-growth}
 \calI=\sum_{i=1}^{f}\sum_{v\ne s}
             q_i^+(v)\log\frac{C_v^i}{C_v^{i-1}}.
\end{equation}
\begin{corollary}\label{cor:unpruned}
Using only the dynamic cut primitive, one can maintain the same numerical
distance guarantees in time
$O(\Delta)+(N+\Delta_{\rm eff}+\calI/\eps)S^{o(1)}$.
\end{corollary}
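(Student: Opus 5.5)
The plan is to rerun the algorithm behind Theorem~\ref{thm:main}, but on the full vertex set and with the truncated targets $C_v^i$ in place of the pruned distances. First modify the input. Add a permanent artificial edge from $s$ to every $v\ne s$ of weight $H=8nW$. In the augmented graph every vertex stays reachable, and its distance is exactly $C_v^i=\min\{d_i(v),H\}$: any genuine finite path has length at most $(n-1)W<H$, so the artificial edge is used only when $v$ is unreachable. Since $H\le 8nW$, the weights remain bounded by $W'=8nW$, and $\log W'=\polylog(S)$ still holds.

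Every deletion or weight increase of a real edge is an update of the augmented graph, and the artificial edges are never touched. So $U_i$ is always the whole vertex set, and the reachability instance used for pruning is not needed. The return mass becomes $1+\deg^-_{G_i}(v)=q_i^+(v)$, which is again nonincreasing with total at most $M$. Under the same substitutions, $d_i\mapsto C^i$ and $q_i\mapsto q_i^+$, the growth measure \eqref{eq:live-growth} of the augmented instance is exactly $\calI$ from \eqref{eq:unpruned-growth}.

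Next I would go through the proof of Theorem~\ref{thm:main} and check that each ingredient uses only the dynamic cut primitive once reachability is trivial. The ingredients are:
\begin{itemize}
\item the degree-weighted clipped logarithmic potential;
\item the certificate turning relative cut balance into relative accuracy at every target;
\item the successor-cut feasibility argument;
\item the warm start;
\item the energy release when return coefficients decrease;
\item the filtering that gives $\Delta_{\rm eff}$.
\end{itemize}
The warm start needs exact initial distances. One Dijkstra run gives these in $O(N\log N)$ time, or the same procedure used in the theorem can be applied. The bound $\Delta_{\rm eff}=O(m(1+\log W')/\eps)$ only changes by constants, since $\log W'=O(\log(nW))$.

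The output must then be translated back. For each $v$, if the maintained estimate is at most $(1+\eps)(n-1)W$, report it. Otherwise report $\infty$. This requires $\eps$-accuracy to separate the two cases, and I need $H/(1+\eps)>(1+\eps)(n-1)W$. With $\eps\le1$ this is $8nW/2>2(n-1)W$, which holds. So a reachable $v$ has estimate at most $(1+\eps)d_i(v)\le 2(n-1)W$, while an unreachable $v$ has estimate at least $H>2(n-1)W$. The factor $8$ in $H$ leaves room to spare. Estimates for reachable $v$ are at least $d_i(v)$ and at most $(1+\eps)d_i(v)$, which is the required guarantee, and $D(s)=0$ holds trivially.

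The remaining points are bookkeeping. Threshold crossings can be detected whenever an internal estimate cell changes, so the conversion costs no extra time. Monotonicity is preserved because $\infty$ is the largest value. The total time is what Theorem~\ref{thm:main}'s analysis gives for the augmented instance: $O(\Delta)+(N+\Delta_{\rm eff}+\calI/\eps)S^{o(1)}$, where $N$ grows by at most $n$.

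The main obstacle is that Theorem~\ref{thm:main} is stated with pruning built in, so it cannot be applied as a black box. The growth analysis must be checked to hold for a vertex whose distance is capped at $H$ and whose return mass stays positive after it becomes unreachable. The term $\log(H/d_{i-1})$ paid at that moment is exactly what $\calI$ charges, and all later terms vanish. I would check that the energy-release argument refers only to the monotone decrease of $q^+$ and never to the vertex leaving the core. The one place where pruning is used is when removed vertices drop out of the potential; without pruning this step is simply never invoked.
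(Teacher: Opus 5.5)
Your proposal is correct and is essentially the paper's simpler variant. The paper likewise keeps every vertex active with permanent source edges of weight $H=8nW$ and return coefficients $Aq_i^+(v)$, states that the budget argument of Section~\ref{sec:budget} (the energy credit for decreasing mass and the summation by parts for the filter) applies with $\mathcal X=\calI$, and detects unreachability from the gap between $H$ and finite distances. The one cosmetic difference is that the paper thresholds the snapshot ($Q\bar p_v>H/2$), whereas you threshold the output estimate at $(1+\eps)(n-1)W$; both separate the two cases for the same reason.
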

The proof is shared with the main algorithm. This variant determines
unreachability from the gap between artificial and finite distances.
If $h_0(v)$ denotes the initial undirected incidence degree after adding
all permanent source edges, then
\begin{equation}\label{eq:comparison}
 0\le\calL\le\calI
 \le\sum_{v\ne s}q_0^+(v)\log\frac{C_v^f}{C_v^0}
 \le\underbrace{\sum_{v\ne s}h_0(v)
                         \log\frac{C_v^f}{C_v^0}}_{\calV}.
\end{equation}
The inequalities use monotonicity of distances and return mass.
Section~\ref{sec:examples} gives polynomial separations between these
growth budgets.

\subsection{Relation to existing methods}
The dynamic cut machinery, implicit potential updates, and change
detection are supplied by \citet{Duality24}. Our outer reduction adds
an all-target accuracy certificate, a feasibility proof for clipped
constraints, and an amortized budget with decreasing return coefficients.
Region growing is a classical volume argument \citep{GVY96}; here the
volume is controlled by core indegrees, while the radius is relative
to the target label. Cut balance also appears in directed sparsification
\citep{CCPS21}; the certificate here relates optimization capacities
to shortest-path labels.

A static transshipment problem with one unit of demand at each target
has optimum $\sum_{v\ne s}d(v)$. Its exact feasible dual optimum recovers
all distances, but a relative approximation to this scalar objective
does not guarantee relative accuracy at every coordinate. For example,
in a star with one target at distance $1$ and $k$ targets at distance
$L$, lowering only the first dual label to $1/2$ changes the objective
by the fraction $1/[2(1+kL)]$, while that target remains wrong by a
factor of two. Our certificate addresses this coordinatewise issue.
Likewise, applying a single-pair structure separately at all targets
does not retain the desired total time.

Liu's incremental algorithm \citep{Liu26} also uses amplified barriers
and delayed updates, but operates with a primal flow and a min-ratio
cycle structure, and recovers distances and paths through additional
mechanisms. Our reduction uses cut geometry in the decremental model.
Numerical estimates and adaptive-adversary guarantees already appear
in earlier directed algorithms \citep{PW20}; the result here combines
them with deterministic almost-linear total time. Fast path reporting
remains outside the claimed interface.

\section{Dynamic primitives and the normalized core}\label{sec:setup}
\subsection{The cut interface}
On a connected undirected graph $\mathcal H=(V',E',u)$, fix an incidence
matrix $B$. Capacities are in $[1,\mathcal U]$, and a gradient
$\gamma\in\R^{V'}$ satisfies $\sum_v\gamma_v=0$ and
$|\gamma_v|\le\mathcal U$. Write
\[
 \norm{Bx}_u=\sum_eu_e|(Bx)_e|,\qquad
 r_* =\min_{x:\norm{Bx}_u>0}\frac{\gamma\cdot x}{\norm{Bx}_u}.
\]
Threshold decomposition shows that cut indicators and their negatives
attain the same minimum. In particular, $r_*<0$ if $\gamma\ne0$.

\begin{proposition}[Dynamic cut interface \citep{Duality24}]\label{prop:cut}
For a fixed graph-size bound $r$, there is a deterministic structure
supporting edge insertion/deletion, paired gradient updates, and
queries of an implicit potential $y_v$. For $\gamma\ne0$, an update
returns $G<0$, $U>0$, and an implicit cut $C$ with
\begin{equation}\label{eq:oracle}
 G=\gamma(C),\qquad \cut_u(C)\le U,
 \qquad \frac{G}{U}\le\frac{r_*}{K}.
\end{equation}
A toggle with $0\le\tau\le1/U$ updates
$By\leftarrow By+\tau B\one_C$ and reports every edge whose
$u_{ab}(y_a-y_b)$ has changed by at least $\xi$ since insertion or
its last report. The cut quality and amortized operation factor obey
\[
 K=2^{O((\log r)^{5/6}\log\log r)},\qquad
 h=2^{O((\log r)^{5/6}\log\log r)}(1+\log\mathcal U).
\]
For $T$ toggles the report count is at most $\Gamma T/\xi$, where
$\Gamma=r^{o(1)}(1+\log\mathcal U)$.
\end{proposition}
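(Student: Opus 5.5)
The plan is to obtain Proposition~\ref{prop:cut} as a restatement of the dynamic min-ratio cut theorem of \citet{Duality24}, not to re-prove that data structure. The work lies in matching their interface to the normalized form used here, in three checks. First, the objective. Their structure maintains an approximate minimizer of $\gamma\cdot x/\norm{Bx}_u$ over cut indicators. Threshold decomposition gives that the continuous minimum $r_*$ is attained by some $\pm\one_C$: writing $x=\int\one_{\{x>\theta\}}\,d\theta$ makes both numerator and denominator linear in the level sets, so the ratio of integrals is at least the minimum over level sets. Restricting to cuts therefore loses nothing. Since $\sum_v\gamma_v=0$ and $\gamma\ne0$, some cut has $\gamma(C)<0$, so $r_*<0$.

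Second, the guarantee~\eqref{eq:oracle}. I would read off from their theorem a returned cut whose ratio is within the factor $K=2^{O((\log r)^{5/6}\log\log r)}$ of optimal, together with the numerator $G=\gamma(C)$ and an upper bound $U$ on $\cut_u(C)$. If they report only the exact cut value, take $U=\cut_u(C)$. If they report an approximate value, rescale $K$ by a constant, which stays in the same asymptotic class. Their amortized cost per operation is $2^{O((\log r)^{5/6}\log\log r)}$ times a factor logarithmic in the ratio of capacity and gradient ranges. Our normalization with capacities in $[1,\mathcal U]$ and $|\gamma_v|\le\mathcal U$ turns that factor into $1+\log\mathcal U$, which gives $h$. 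Paired gradient updates, which keep $\sum\gamma=0$, and edge insertions and deletions are native operations there, subject to the fixed size bound $r$.

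Third, the toggle and change detector, which I expect to be the main obstacle. Their potential is stored implicitly. A toggle adds $\tau\one_C$ to $y$, so $(By)_e$ changes by $\tau$ exactly on the edges crossing $C$ and is unchanged elsewhere. The weighted change on edge $ab$ is then $u_{ab}\tau$ on crossing edges. Summed over edges, one toggle therefore moves $\sum_e u_e|\Delta(By)_e|$ by $\tau\cut_u(C)\le\tau U\le1$. The detector reports an edge each time its accumulated weighted change since the last report reaches $\xi$. Charging each report to $\xi$ units of accumulated change gives the naive count $T/\xi$. The hard part is that their detector works on a hierarchical or sparsified representation, so its per-edge accounting is only approximate. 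This costs the extra factor $\Gamma=r^{o(1)}(1+\log\mathcal U)$, coming from the depth of their hierarchy and the number of capacity scales.

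To settle that step, I would first trace the statement of their change-detection lemma. I expect it to bound reported edges by total $\ell_1$ movement of $By$ in capacity-weighted norm, up to an $r^{o(1)}$ factor. Combining this with the per-toggle movement bound of at most $1$ gives $\Gamma T/\xi$. Finally, I would note that the structure is deterministic, so the guarantees hold against adaptive adversaries.
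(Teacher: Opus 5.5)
Your route matches the paper's: the proposition is a repackaging of Definition~5.2, Theorem~5.4 and Section~5.1 of \citet{Duality24}. Several of your checks agree with what the paper uses:
\begin{itemize}
\item threshold decomposition gives $r_*<0$;
\item $U$ is an upper bound rather than the exact cut value;
\item the per-toggle movement $\tau\cut_u(C)\le\tau U\le1$ underlies the $\Gamma T/\xi$ report count, and a net change of at least $\xi$ forces accumulated movement of at least $\xi$, so ignoring cancellation only helps.
\end{itemize}

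\textbf{The gap: the dependence on $\mathcal U$.} You assert that the cited cost carries a factor logarithmic in the capacity range and that $K$ can be read off independently of $\mathcal U$, so that ``normalization'' yields both $h$ and $K$. In this paper $\mathcal U=O(AM/\kappa)$ has bit length $O(\log N+\log Q+\log(1/e))$. That is only $\polylog(S)$, and it may be much larger than $O(\log r)$, e.g.\ when $1/\eps\gg N$ or $W$ is quasi-polynomial. The paper treats this range as an extension of the cited construction, not as a direct instance.

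The missing idea is the structure of Section~5.1:
\begin{itemize}
\item capacities are handled by $O(1+\log\mathcal U)$ layers, each an $O(r)$-size instance with capacities polynomial in $r$;
\item the quality argument selects a single layer that captures a largest boundary capacity of the optimal cut, so the cut quality does not deteriorate with the number of layers.
\end{itemize}
This is what makes $K=2^{O((\log r)^{5/6}\log\log r)}$ independent of $\mathcal U$. Meanwhile the running time and the detection reports grow only linearly in the number of layers; this is the $(1+\log\mathcal U)$ factor in $h$ and in $\Gamma$. The $\polylog(S)$-bit arithmetic is absorbed in $S^{o(1)}$, and no $\Theta(S)$ vertices are needed. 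Without this step your argument covers only $\log\mathcal U=O(\log r)$, which is not the regime in which the proposition is applied.

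\textbf{A smaller omission.} You treat edge and gradient updates as unlimited native operations. The underlying sparsifiers, however, are specified only for a bounded update horizon. The amortized factor $h$ over an arbitrary update sequence therefore relies on rebuilding after $\Theta(N)$ interface calls, with padding leaves keeping the graph size $\Theta(N)$. The paper defers this to Section~\ref{sec:implementation}; your proposal should at least note that the cited guarantee holds per epoch.
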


We use Definition~5.2, Theorem~5.4, and Section~5.1 of the full version
of \citet{Duality24}. The denominator $U$ is an upper bound, not the
exact cut capacity. We conservatively retain the capacity-layer factor
in its detection analysis. Ignoring cancellation in detection only
strengthens the guarantee needed here. Zero gradient is checked
separately by maintaining the number of nonzero stored gradient entries;
no ratio $0/0$ is ever used.

Section~5.1 of that construction uses $O(1+\log\mathcal U)$ layers,
each with $O(r)$ graph size and capacities bounded by a polynomial in
$r$. The quality proof selects a layer capturing a largest boundary
capacity and does not deteriorate with the number of layers. Hence the
same construction applies when $\log\mathcal U=\polylog(S)$:
the additional layers and arithmetic cost are $S^{o(1)}$, even when
$S$ is much larger than $r$. This extension needs no $\Theta(S)$
vertices. Section~\ref{sec:implementation} handles the bounded update
horizon of the underlying sparsifiers by epochs.

We also use Corollary~1.9 of \citet{Duality24}, which explicitly
maintains the set reachable from a fixed source under edge deletions
in $(m+n)^{1+o(1)}$ total time. Recording changes to its explicit
membership flags reports newly unreachable vertices with $O(n)$
additional total work. This instance ignores positive weight increases.
It is used only by the main variant.

\subsection{Filtering integer weights}
Set $b=\lceil1/\eps\rceil$, $e=1/b$, and $B_0=16b$. Thus
$\eps/2\le e\le\eps$. For a positive integer $w$, put
\begin{equation}\label{eq:filter}
 k_w=\max\{0,\lfloor\log_2(w/B_0)\rfloor\},\qquad
 t_w=2^{k_w},\qquad \bar w=t_w\lceil w/t_w\rceil.
\end{equation}
Bit length of $\lfloor w/B_0\rfloor$ evaluates $k_w$; no table of
$\Theta(1/e)$ entries is constructed. Below $B_0$ the rule is the
identity. Otherwise $t_w\le w/B_0$, giving
\begin{equation}\label{eq:filter-bound}
 w\le\bar w\le(1+e/16)w<2W=:W_0.
\end{equation}
The rule is monotone. At a dyadic boundary the last rounded value
of the preceding interval is the boundary itself, so monotonicity
also holds when $t_w$ doubles. Each dyadic interval has $O(B_0)$
distinct values. Passing through every deletion and only weight
increases that change $\bar w$ gives the stated bound on
$\Delta_{\rm eff}$.

If $B_0>W$, initialization records that the filter is the identity,
so raw updates need not perform arithmetic on the accuracy parameter.
Otherwise $B_0$ and all filter intermediates fit in a constant number
of input words. This accounts for the $O(\Delta)$ term in the stated
model. Updates to constraints already removed from the distance
instance can be ignored after the raw input is read.

\subsection{Active vertices, core edges, and return constraints}
In the main variant the active set is $U_i$; in the simpler variant
every vertex is active. Keep filtered original edges whose endpoints
are active, and add one permanent core edge $s\to v$ of weight
$H=4nW_0=8nW$ for each active $v\ne s$. Divide core costs by
$Q=2H$ and fix $p_s=0$. Each core cost is at least $1/Q$;
original costs are at most $1/(8n)$, and permanent costs are $1/2$.
There are at most $M=m+n-1$ core edges.

Let $q_i(v)$ be the current core indegree at an active nonsource
vertex. This agrees with~\eqref{eq:live-degree} in the main variant
and with $q_i^+(v)$ in the simpler variant. Set $q_i(v)=0$ at
inactive vertices and at $s$. For each active $v\ne s$ add an
optimization-only return constraint $v\to s$ of cost zero and
coefficient $Aq_i(v)$. Core constraints have coefficient $1$.
Return constraints are absent from shortest-path computations.

For analysis, define $a_v^i$ as the distance in the \emph{full}
filtered original graph augmented with all permanent source edges,
and $z_v^i=a_v^i/Q$. Then
\begin{equation}\label{eq:distance-filter}
 C_v^i\le a_v^i\le(1+e/16)C_v^i.
\end{equation}
Indeed, a finite original shortest path is simple and its filtered
cost is below $(n-1)W_0<H/4$; an unreachable target has $a_v^i=H$.
Paths to a reachable target use only reachable vertices, so its
distance is unchanged by removal of inactive vertices. Thus $z_v^i$
is also the distance in the actual normalized core whenever $v$
is active. We omit the update index when considering a fixed state.

\section{The clipped potential and its invariant}\label{sec:potential}
Choose fixed parameters from the initial graph:
\begin{equation}\label{eq:params}
 \begin{aligned}
 \delta&=e/16,&
 A&=\left\lceil16\lceil\log_2(M+1)\rceil/\delta\right\rceil,\\
 J&=\lceil\log_2(4Q/\delta)\rceil,&
 P&=3AMJ,&
 \kappa&=(1000AMQJ)^{-1}.
 \end{aligned}
\end{equation}
In particular, $A\delta=O(\log(M+1))$, $Ae=O(\log(M+1))$, and
$\kappa<\delta/Q$. Define the convex $C^1$ function
\begin{equation}\label{eq:psi}
 \psi_\kappa(x)=
 \begin{cases}
  -\log x,&x\ge\kappa,\\
  -\log\kappa+1-x/\kappa,&x\le\kappa.
 \end{cases}
\end{equation}
For a logical constraint $u\to v$, its slack is
$\sigma_e=c_e+p_u-p_v$. With the tail-minus-head incidence matrix
$B_{\rm inc}$, write
\begin{equation}\label{eq:potential}
 \Psi_i(p)=\sum_e\mu_e\psi_\kappa(\sigma_e),\qquad
 t_e=\max\{\sigma_e,\kappa\},\qquad f_e=\mu_e/t_e,
 \qquad g=\nabla\Psi_i=-B_{\rm inc}^{\mathsf T}f.
\end{equation}
Here $\mu_e=1$ for a core constraint and $\mu_e=Aq_i(v)$ for its
return constraint. For a physical direction $x$, define
$\norm{x}_f=\sum_ef_e|(B_{\rm inc}x)_e|$.
The capacities $f$ are algebraic quantities, not necessarily a
feasible flow. At an active nonsource vertex,
\[
 g_v=I_{\rm core}(v)-O_{\rm core}(v)-b_v,
 \qquad b_v=Aq_i(v)/\max\{p_v,\kappa\}.
\]

Compute exact filtered core distances once and initialize active labels
to $p_v=(1-\delta)z_v^0$. The shortest-path inequalities give
\begin{equation}\label{eq:warm-slack}
 \sigma_e\ge\delta c_e\ge\delta/Q>\kappa\quad(e\text{ core}),
 \qquad p_v\ge(1-\delta)/Q>\kappa.
\end{equation}
Every initial slack is at least $2^{-J}$ and below one, so
$\Psi_{\rm warm}\le(M+AM)J\log2<P$.

The invariant box for active nonsource coordinates is
\begin{equation}\label{eq:box}
 -1/8<p_v<5/8.
\end{equation}
Every slack is below one on its closure: an original slack is at
most $1/(8n)+3/4<1$, and permanent and return slacks are at most
$5/8$. All potential terms are consequently nonnegative. At a box
boundary a return or permanent slack equals $-1/8$, giving
$\Psi_i\ge1/(8\kappa)>P$.

Deleting a constraint removes a nonnegative term; increasing a cost
lowers its term; decreasing a return coefficient also removes
nonnegative energy. Input changes are applied without changing the
physical labels, and therefore never increase energy. Descent steps
proved below also decrease energy. Convexity bounds the energy along
each step segment by the larger endpoint energy, preventing the
segment from crossing the box boundary. Removing an active coordinate
removes all its logical constraints; its later value is irrelevant
to the potential. Thus, in all states of the algorithm,
\begin{equation}\label{eq:ranges}
 0\le\Psi_i\le P,\qquad
 1\le f_e\le\mu_e/\kappa,
 \qquad \sum_ef_e\le3AM/\kappa.
\end{equation}
The last bound uses $\sum_e\mu_e\le M+AM$. Intermediate core
slacks are allowed to be negative.

\section{Cut balance implies accuracy at every target}\label{sec:accuracy}
\begin{lemma}[All-target certificate]\label{lem:accuracy}
Fix an active core and its current indegrees. Suppose $p$ lies in
the box~\eqref{eq:box} and
\begin{equation}\label{eq:balance}
 |g\cdot x|\le\norm{x}_f/3
 \qquad\text{for every physical direction }x.
\end{equation}
Then all core and return slacks exceed $\kappa$, and for every
active $v\ne s$,
\begin{equation}\label{eq:accuracy}
 p_v\le z_v\le(1+\delta)p_v.
\end{equation}
Moreover, every core slack is at least $1/(10AMQ)$.
\end{lemma}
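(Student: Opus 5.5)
The plan is to test the balance condition~\eqref{eq:balance} only on indicator directions $x=\one_X$ of vertex sets $X$ with $s\notin X$. For such $x$, $(B_{\rm inc}x)_e=+1$ on logical constraints leaving $X$ and $-1$ on those entering $X$, so $g\cdot x=-(F_{\rm out}(X)-F_{\rm in}(X))$ and $\norm{x}_f=F_{\rm out}(X)+F_{\rm in}(X)$. Here $F_{\rm out},F_{\rm in}$ are the total capacities $f$ of constraints leaving or entering $X$. Condition~\eqref{eq:balance} then gives the two-sided comparison
\[
F_{\rm out}(X)\le2F_{\rm in}(X),\qquad F_{\rm in}(X)\le2F_{\rm out}(X).
\]
Return constraints $w\to s$ with $w\in X$ always leave $X$ and have slack $p_w$. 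Everything below is a choice of $X$ plus this comparison.

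\textbf{Step 1: return slacks.} I would take a sublevel set $X=\{w:p_w\le\theta\}$. Its leaving capacity includes $\sum_{w\in X}Aq(w)/\max\{p_w,\kappa\}$, which is at least $A/\max\{\theta,\kappa\}$ because $q\ge1$. A core constraint $a\to b$ entering $X$ has $p_a>\theta\ge p_b$, so its slack exceeds $c_e\ge1/Q$, its capacity is at most $Q$, and $F_{\rm in}(X)\le MQ$. The comparison forces $\theta\gtrsim A/(2MQ)$ whenever $X\neq\emptyset$. Hence every return slack $p_w$ exceeds roughly $A/(2MQ)\gg\kappa$, and every return capacity is at most about $2MQ\cdot q(w)$.

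\textbf{Step 2: core slacks (the successor cut).} Suppose a core constraint $u\to v$ has slack at most the target bound $1/(10AMQ)$; then $p_v>p_u$. Take the superlevel set $X=\{w:p_w\ge p_v\}$, which contains $v$ but not $u$ or $s$. The bad constraint enters $X$ with capacity at least $10AMQ$, so $F_{\rm in}(X)\ge10AMQ$. A core constraint leaving $X$ goes from $p_a\ge p_v$ to $p_b<p_v$, so its slack is at least $c_e$ and its capacity at most $Q$. Returns leaving $X$ have capacity at most about $2MQq(w)$ by Step 1. Thus $F_{\rm out}(X)=O(MQ)+O(M^2Q/A)\cdot(\text{const})$. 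The comparison $F_{\rm in}\le2F_{\rm out}$ should then yield a contradiction, using $A\ge16\lceil\log_2(M+1)\rceil/\delta$ and the choice of $\kappa$ in~\eqref{eq:params}. If the constants do not close directly, the fallback is to sweep the threshold of $X$ over a geometric family and pick one level where the entering capacity is large compared with the leaving capacity. This yields all slacks $>\kappa$ and the stated $1/(10AMQ)$ bound. I expect this step to be the main obstacle, because an edge entering $X$ from below the threshold may itself be nearly tight, so the constants must be tracked carefully.

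\textbf{Step 3: accuracy~\eqref{eq:accuracy}.} The lower bound is immediate once core slacks are positive. Summing $c_e+p_a-p_b>0$ along a shortest core path from $s$ (with $p_s=0$) gives $p_v\le z_v$. For the upper bound, suppose $z_v>(1+\delta)p_v$ and grow backward regions toward $v$ in the slack metric: $Y_r=\{w:\dist_\sigma(w,v)\le r\}$ for $0\le r\le\delta p_v$. Since $\dist_\sigma(s,v)=z_v-p_v>\delta p_v$, we have $s\notin Y_r$. For $w\in Y_r$, $\dist_\sigma(w,v)=\dist_c(w,v)+p_w-p_v\le r$, which gives $p_w\le(1+\delta)p_v$. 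So the return capacity leaving $Y_r$ is at least $Aq(Y_r)/((1+\delta)p_v)$, and balance gives
\[
F_{\rm in}(Y_r)\ge\frac{Aq(Y_r)}{2(1+\delta)p_v}.
\]
Each core constraint $e$ entering $Y_r$ does so for an $r$-interval of length at most $\sigma_e$ with capacity $1/\sigma_e$. Hence $\int_0^R F_{\rm in}(Y_r)\,dr$ is at most the number of core edges with head in $Y_R$, which is at most $q(Y_R)$. With $V(r)=q(Y_r)$, this gives the region-growing inequality $\int_0^RV\,dr\cdot A/(2(1+\delta)p_v)\le V(R)$. Iterating over consecutive intervals of length $\Theta(p_v/A)$ shows $V$ at least doubles on each interval. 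Over length $\delta p_v$ that is $\Omega(A\delta)\ge2\log_2(M+1)$ doublings from $V(0)\ge1$, exceeding the total mass $M$, which is a contradiction.
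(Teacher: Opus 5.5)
Your Step~3 is sound once Steps~1--2 hold, but Step~2 does not close. Step~3 cannot repair it, because it already relies on unclipped capacities $1/\sigma_e$ and on the slack metric.

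To contradict $10AMQ\le F_{\rm in}(X)\le 2F_{\rm out}(X)$, you need the return capacity leaving $X$ to be $O(AMQ)$. Step~1 gives at best $Aq(w)/p_w\le 2MQ\,q(w)$ per return, i.e.\ up to $2M^2Q$ in total. (Your $O(M^2Q/A)$ is not what Step~1 yields, and even that would require $A^2\gtrsim M$.) Since $A=\Theta(\log(M+1)/\delta)$ is in general far smaller than $M$, no contradiction follows. Sweeping the threshold does not help: what is missing is an $\Omega(1/Q)$ lower bound on the labels, in particular on $p_v$, not a better level.

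Step~1 is also incorrect as written. For $\theta\ge0$ the source satisfies $p_s=0\le\theta$, so a core edge $s\to b$ entering $X$ has slack $c_e-p_b$, which can be arbitrarily small. Hence $F_{\rm in}(X)\le MQ$ is unjustified. In fact, the conclusion $p_w\gtrsim A/(2MQ)$ is inconsistent with the lemma's own conclusion $p_w\le z_w$ whenever $A>2M$ and some target has $z_w=1/Q$.

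The missing idea is to obtain such a lower bound \emph{before} core feasibility is known. The paper does this by running the region growing first, with clipped lengths $\ell_e=\max\{\sigma_e,\kappa\}=1/f_e$. These satisfy $p_a-p_b=\sigma_e-c_e\le\ell_e$ even for negative slacks. After the singleton argument shows returns are unclipped, the volume argument yields a core path $\pi$ with $\sum_{e\in\pi}c_e\le p_t+\sum_{e\in\pi}\ell_e\le(1+\delta)p_t$. Hence $z_t\le(1+\delta)p_t$, which with $z_t\ge1/Q$ gives $p_t\ge1/((1+\delta)Q)$ and $\sum_vb_v\le3AMQ$. A successor cut then bounds every core capacity by $2(2MQ+3AMQ)\le10AMQ$.

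Alternatively, you can repair Step~1 directly by counting entering edges at their heads. Take $X=\{w\ne s:p_w\le1/(2Q)\}$. Every entering core edge, including one from $s$, has slack at least $1/(2Q)$, so $F_{\rm in}(X)\le2Q\,q(X)$. The returns alone give $F_{\rm out}(X)\ge2AQ\,q(X)$. Balance then forces $A\le2$, so $X$ is empty. With all labels above $1/(2Q)$, the returns leaving your superlevel set carry less than $2AMQ$. Your superlevel cut then gives $10AMQ<2MQ+4AMQ$, which is impossible, and your Step~3 goes through as written.
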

\begin{proof}
For an active set $R$ not containing $s$, let $I(R)$ and $O(R)$ be
the total entering and leaving core capacities, and let
$b(R)=\sum_{v\in R}b_v$. Its gradient sum is $I(R)-O(R)-b(R)$,
while its norm is $I(R)+O(R)+b(R)$. Hence~\eqref{eq:balance} implies
\begin{equation}\label{eq:cutbalance}
 I(R)\ge\frac{O(R)+b(R)}2,
 \qquad I(R)\le2\bigl(O(R)+b(R)\bigr).
\end{equation}

\emph{Return constraints are not clipped.}
There are exactly $q(v)$ core edges entering $v$, so
$I(v)\le q(v)/\kappa$. If $p_v\le\kappa$, then
$b_v=Aq(v)/\kappa$, contradicting the lower inequality
in~\eqref{eq:cutbalance} because $A>2$. Thus $p_v>\kappa$ and
$b_v=Aq(v)/p_v$ at every active nonsource vertex.

\emph{Backward region growth gives the distance upper bound.}
Assign a core edge positive length $\ell_e=t_e=1/f_e$. Regardless
of core feasibility, $p_u-p_v=\sigma_e-c_e\le\ell_e$.
Fix an active target $t$, let $a_u^\ell$ be the $\ell$-distance
from $u$ to $t$, and set $R(r)=\{u:a_u^\ell\le r\}$.
An infinite distance contributes nothing to the volume below.
Suppose $s\notin R(r_0)$ for $r_0=\delta p_t$. Telescoping along
an $\ell$-shortest path gives
$p_u\le p_t+r\le(1+\delta)p_t$ for $u\in R(r)$ and
$0\le r\le r_0$. Define
\begin{equation}\label{eq:volume}
 V_t(r)=1+\sum_{e=u\to v\ {\rm core}}
           \clip\bigl((r-a_v^\ell)/\ell_e,0,1\bigr).
\end{equation}
Positive lengths imply $V_t(0)=1$, and $V_t(r)\le M+1$.
Each nonzero contribution has its head in $R(r)$, so
\begin{equation}\label{eq:volume-degree}
 V_t(r)\le1+\sum_{v\in R(r)}q(v)
          \le2\sum_{v\in R(r)}q(v).
\end{equation}
The last step uses $t\in R(r)$ and $q(t)\ge1$.

For almost every $r$, an edge entering $R(r)$ satisfies
$a_v^\ell\le r<a_u^\ell\le a_v^\ell+\ell_e$, and contributes
$1/\ell_e=f_e$ to $V_t'(r)$. Thus $V_t'(r)\ge I(R(r))$.
Using~\eqref{eq:cutbalance} and~\eqref{eq:volume-degree},
\[
 \frac{V_t'(r)}{V_t(r)}
 \ge\frac{A}{4(1+\delta)p_t}
 \qquad\text{for almost every }r\in[0,r_0].
\]
Integration gives
$\log V_t(r_0)\ge A\delta/[4(1+\delta)]>\log(M+1)$,
contradicting the volume upper bound. There is therefore a simple
core path $\pi$ from $s$ to $t$ of total $\ell$-length at most
$\delta p_t$. Its normalized cost satisfies
\[
 \sum_{e\in\pi}c_e
 =p_t+\sum_{e\in\pi}\sigma_e
 \le p_t+\sum_{e\in\pi}\ell_e
 \le(1+\delta)p_t.
\]
This proves $z_t\le(1+\delta)p_t$ without assuming positive core
slacks. Since $z_t\ge1/Q$, it also gives
\begin{equation}\label{eq:return-bound}
 p_t\ge\frac1{(1+\delta)Q},\qquad
 \sum_{v\ne s}b_v\le AM(1+\delta)Q\le3AMQ.
\end{equation}

\emph{A successor cut restores core feasibility.}
Let $D_{\rm high}$ consist of core edges with $f_e\ge2Q$.
Such an edge has $\sigma_e\le1/(2Q)$, and therefore
$p_v-p_u=c_e-\sigma_e\ge1/(2Q)>0$.
These edges form a directed acyclic graph. For a high edge
$u\to v$, take $R$ to be $v$ and all its successors in this graph.
Then $u\notin R$, no high edge leaves $R$, and $s\notin R$
because all active nonsource labels are positive. Consequently
$O(R)<2MQ$, and~\eqref{eq:cutbalance} gives
\[
 f_{uv}\le I(R)\le4MQ+6AMQ\le10AMQ.
\]
Edges outside $D_{\rm high}$ satisfy the same bound.
Since $10AMQ<1/\kappa$, no core edge is clipped; indeed
$\sigma_e\ge1/(10AMQ)>\kappa$. All constraints are strictly
feasible. Telescoping positive slacks along a core shortest path
then gives $p_v\le z_v$, proving~\eqref{eq:accuracy}.
\end{proof}

Only current indegrees enter the volume bound and the singleton
argument. The positive lower bound on core costs is needed in the
successor argument. A zero-cost directed cycle can carry cancelling
clipped capacities, so that argument cannot be applied to such a
cycle by merely decreasing $\kappa$.

\section{Cached implicit descent}\label{sec:descent}
\subsection{Groups, gradients, and a fixed skeleton}
Choose a fixed integer upper bound $K\ge1$ on the cut quality and set
\begin{equation}\label{eq:detector}
 \eta=1/(100K),\qquad \xi=1/(2000K).
\end{equation}
Logical constraints with the same unordered endpoints form a group.
There are at most four members: two original orientations, and, for
a pair involving $s$, a permanent source edge and a return constraint.
The undirected group capacity is the sum of its stored logical
capacities $\widetilde f_e$. Their gradient contributions retain
their orientations, giving
$\widetilde g=-B_{\rm inc}^{\mathsf T}\widetilde f$.

For each physical $v\ne s$, introduce a distinct auxiliary vertex
$a_v$ with permanent unit edges $sa_v,a_vv$. These edges form a
connected skeleton even while a group is deleted and reinserted.
For a fixed epoch size, also attach $m$ auxiliary unit-capacity
leaves to $s$. They have zero gradient and no logical constraints.
This harmless padding makes the maintained auxiliary graph size
$\Theta(N)$ throughout, so an $O(N)$ update horizon is admissible
even after most original edges disappear. All skeleton edges are
absent from the potential and the shortest-path core.

At a group insertion or refresh, query its endpoints, compute
$f_e=\mu_e/\max\{\sigma_e,\kappa\}$, and round it upward with
relative error at most $\xi$. Save the endpoint potential difference.
An input cost change, logical deletion, or return-coefficient change
forces a group refresh even if the detector has not reported it.
The same is true of a group affected by clearing an inactive vertex.

For an unreported group of stored capacity $c_q$, the difference
between its current and saved endpoint potential differences is
less than $\xi/c_q$. If $t_e^0$ was a member's denominator at
the last refresh, then $c_q\ge(1-\xi)\mu_e/t_e^0$. Its cost and
coefficient have not changed in the meantime, so
\begin{equation}\label{eq:cache}
 |\sigma_e-\sigma_e^0|
 \le\frac{\xi t_e^0}{(1-\xi)\mu_e}.
\end{equation}
The map $x\mapsto\max\{x,\kappa\}$ is 1-Lipschitz. Combining
this with the refresh rounding gives, including across the clipping
point,
\begin{equation}\label{eq:cache-invariant}
 (1-\eta)f_e\le\widetilde f_e\le(1+\eta)f_e,
 \qquad |(g-\widetilde g)\cdot x|\le\eta\norm{x}_f.
\end{equation}
For example,~\eqref{eq:cache} bounds relative denominator drift
by at most $\xi/(1-\xi)$, and the combined relative capacity error
is below $10\xi\le\eta$. All reported groups are refreshed before
the next descent decision. Refreshing after a coefficient change
is essential: a pure coefficient change does not alter the endpoint
potential difference and need not trigger a detector report.

Extend an active physical direction by setting every inactive and
auxiliary coordinate equal to $x_s$. The skeleton then contributes
only $\sum_{v\text{ active},v\ne s}|x_v-x_s|$.
Since each return capacity is at least $Aq(v)\ge A$ in the box,
the full stored auxiliary norm of this extension is at most
$(1+\eta+1/A)\norm{x}_f$. Conversely, for any auxiliary direction,
its true physical norm is at most $(1-\eta)^{-1}$ times its
stored auxiliary norm.

\subsection{The stopping test}
If the exact stored gradient $\widetilde g$ is zero, stop:
\eqref{eq:cache-invariant} immediately implies~\eqref{eq:balance}.
Otherwise retrieve a current tuple $G,U,C$. If
$G/U\ge-1/(4K)$, then~\eqref{eq:oracle} gives $r_*\ge-1/4$.
Extending a physical direction as above yields
\[
 g\cdot x\ge-
 \left(\frac{1+\eta+1/A}{4}+\eta\right)\norm{x}_f
 >-\norm{x}_f/3.
\]
Applying the same inequality to $-x$ proves~\eqref{eq:balance}.
Lemma~\ref{lem:accuracy} gives strict feasibility and simultaneous
relative accuracy. A state passing this test is called \emph{stable}.
The argument uses only an upper bound on cut capacity and the correct
sign in~\eqref{eq:oracle}.

\subsection{A uniform decrease for every repair step}
When the test fails, set $\tau=1/(100KU)$ and round down on a
fixed binary step grid to $\tau'\in[\tau/2,\tau]$.
Section~\ref{sec:precision} constructs the grid. Invoke the implicit
toggle with $\tau'$, process its reports, refresh each reported
group once, and repeat the stopping test. Physical labels are
$p_v=y_v-y_s$, so source-side cuts require no explicit shift of
the whole distance array.

\begin{lemma}[Uniform descent]\label{lem:descent}
Each quantized repair step lowers the current potential by at least
$1/(2000K^2)$ and keeps every active label in the invariant box.
\end{lemma}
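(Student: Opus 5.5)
The plan is a standard second-order descent estimate along the segment $y+t\one_C$, $t\in[0,\tau']$, with the energy change written in physical labels. The step direction is the cut indicator $x=\one_C$, taken with the sign convention that makes $\widetilde g\cdot x=G<0$. After the auxiliary extension, $x$ is a $0/1$ vector, and its physical norm obeys
\[
 \norm{x}_f\le(1-\eta)^{-1}\cut_{\widetilde f}(C)\le(1-\eta)^{-1}U.
\]
The linear term then follows from~\eqref{eq:cache-invariant}:
\[
 g\cdot x\le G+\eta\norm{x}_f\le G+\frac{\eta U}{1-\eta}.
\]
Since the test failed, $G<-U/(4K)$, and with $\eta=1/(100K)$ this gives $g\cdot x\le -U/(5K)$. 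Because $\tau'\ge\tau/2=1/(200KU)$, the first-order decrease is at least $\tau'U/(5K)\ge1/(1000K^2)$.

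\textbf{Second-order term.} Along the step, each slack moves by $\tau'|(B_{\rm inc}x)_e|\le\tau'\le1/(100KU)$. The function $\psi_\kappa$ has second derivative at most $1/t^2$ for $t=\max\{\sigma,\kappa\}$. I would show that the denominators $t_e$ change by at most a constant factor along the segment. The key point is that every crossing edge has stored capacity $\widetilde f_e\le U$, hence true capacity $f_e\le U/(1-\eta)$, i.e.\ $t_e\ge\mu_e(1-\eta)/U$. Since $\mu_e\ge1$, the shift $\tau'\le 1/(100KU)$ is at most $t_e/(99K)$. Around the clipping point this uses 1-Lipschitzness of $\max\{\cdot,\kappa\}$, exactly as in the cache argument. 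Then the curvature term of edge $e$ is at most
\[
 \mu_e\tau'^2/(t_e(1-1/(99K)))^2\le 2f_e^2\tau'^2/\mu_e\le 2f_e\tau'\cdot(f_e\tau').
\]
Using $f_e\tau'\le 1/(99K)$ and summing over crossing edges against $\norm{x}_f\le U/(1-\eta)$ gives a total second-order cost of at most $\tau'U\cdot O(1/K)$ with a small constant, say $\tau'U/(40K)$. This term is dominated by the first-order gain: the net decrease is at least $\tau'U(1/5-1/40)/K$. That bound is at least $1/(2000K^2)$ after substituting $\tau'U\ge1/(200K)$, with slack to absorb the constants.

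\textbf{Box invariant.} The energy along the segment is a convex function of $t$, so it is at most the maximum of its endpoint values. Both endpoints have energy at most $P$: the start by~\eqref{eq:ranges}, the end by the decrease just shown. The box boundary, however, forces $\Psi\ge1/(8\kappa)>P$, so the segment cannot reach it and every active label remains in~\eqref{eq:box}. This argument is circular unless the Taylor estimate is first established on the portion of the segment inside the box. I would handle this by a continuity or first-exit argument: let $t^*$ be the first exit time; the estimate holds on $[0,t^*]$ and shows the energy there stays below $P$, contradicting $\Psi(t^*)\ge1/(8\kappa)$.

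\textbf{Main obstacle.} I expect the hardest step to be the uniform control of the denominators across the clipping threshold. It has to handle cached versus true capacities, groups of up to four members sharing one undirected capacity, and return edges whose slack involves $p_v$ itself. I would first try to bound each logical member's $f_e$ by its group's stored capacity, which is at most $U/(1-\eta)$ for a cut edge. I would then treat the linear branch of $\psi_\kappa$ as having zero curvature, which only helps.
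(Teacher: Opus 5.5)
Your proposal is correct and follows essentially the paper's proof. It uses a second-order Taylor bound along the toggle segment, with the cut capacity bound $U$ and the cache invariant controlling both the linear term and the relative slack motion, so that the 1-Lipschitz clipping keeps denominators within a factor $1-1/(99K)$; convexity and the box-boundary energy $1/(8\kappa)>P$ then give the box invariant. The differences are cosmetic: you bound $\sum_e (f_e\Delta\sigma_e)^2/\mu_e$ by a per-edge maximum times the total norm rather than by the squared norm, and your circularity worry is unnecessary because the Taylor estimate uses no box information along the segment (your first-exit argument is nonetheless valid).
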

\begin{proof}
Let $x$ be the physical displacement of the unquantized toggle.
Its true weighted norm is at most
$r=1/(99K)$ by~\eqref{eq:cache-invariant} and $\cut_u(C)\le U$.
For every logical constraint,
$|\Delta\sigma_e|/t_e\le r/\mu_e\le r$.
The failing test and the gradient error give
\[
 g\cdot x\le-\frac{1/400-1/9900}{K^2}.
\]
Along the step segment, each denominator is at least
$(1-r)t_e^{\rm old}$. The second derivative of $\psi_\kappa$
is zero below $\kappa$ and $1/\sigma^2$ above it, almost
everywhere. The integrated Taylor remainder, also for a segment
crossing the clipping point, is at most
\[
 \frac1{2(1-r)^2}
 \sum_e\frac{(f_e^{\rm old}\Delta\sigma_e)^2}{\mu_e}
 \le\frac{r^2}{2(1-r)^2}
 \le\frac1{2\cdot98^2K^2}.
\]
Here every $\mu_e\ge1$. Quantization retains at least half of
the linear decrease and does not increase the remainder. Thus
\[
 \Psi_{\rm old}-\Psi_{\rm new}
 \ge\frac1{K^2}
 \left(\frac1{800}-\frac1{19800}-\frac1{2\cdot98^2}\right)
 >\frac1{2000K^2}.
\]
Convexity and the box-boundary energy argument in
Section~\ref{sec:potential} keep the step in the box. No assumption
of positive intermediate core slacks was used.
\end{proof}

Nonnegative energy gives the coarse bound $T\le2000PK^2$ on the
total number of repair toggles, and implies that every repair
terminates. The sharper bound below removes the initial
$1/\eps$ factor from the budget when finite distance growth is small.

\section{A growth budget with decreasing return coefficients}\label{sec:budget}
The following proof applies to both variants. Extend $q_i(v)$ by
zero when its return constraint has been removed. Define
\[
 R_i=\sum_{v\ne s}q_i(v)\log(1/z_v^i),\qquad
 \lambda_i(v)=q_{i-1}(v)-q_i(v)\ge0.
\]
The full augmented distances $z_v^i$ are always positive and at
most $1/2$, including at inactive vertices. Every term with
$\lambda_i(v)>0$ comes from a vertex active before the input update.

The warm start and~\eqref{eq:warm-slack} give
\begin{equation}\label{eq:initial-excess}
 \Psi_{\rm warm}-AR_0
 \le M\log(Q/\delta)+4A\delta M=:E_0.
\end{equation}
Indeed, the return excess is
$A\sum_vq_0(v)\log(1/(1-\delta))\le4A\delta M$.
At the final stable state, all returns are unclipped and
$p_v^f\le z_v^f$. Core terms are nonnegative, hence
\begin{equation}\label{eq:final-energy}
 \Psi_f(p^f)\ge AR_f.
\end{equation}

\subsection{Crediting the energy released by an input update}
All logical changes from update $i$ are completed at the old
stable physical labels $p^{i-1}$, before any new repair toggle.
Reducing the return coefficient by $A\lambda_i(v)$ releases
exactly $A\lambda_i(v)\log(1/p_v^{i-1})$.
Removing a return entirely is included by setting its new mass
to zero. Since $p_v^{i-1}\le z_v^{i-1}$ at these active vertices,
the total energy loss of the input batch,
$L_i=\Psi_{i-1}(p^{i-1})-\Psi_i(p^{i-1})$, satisfies
\begin{equation}\label{eq:credit}
 L_i\ge A\sum_{v\ne s}\lambda_i(v)\log(1/z_v^{i-1}).
\end{equation}
Other deleted core terms and cost increases only release more
nonnegative energy. This inequality is asserted at the old
\emph{stable} state; it need not hold at an arbitrary clipped
intermediate state.

An exact expansion of the reference difference gives
\begin{equation}\label{eq:reference-difference}
 R_{i-1}-R_i
 =\sum_{v\ne s}\lambda_i(v)\log(1/z_v^{i-1})
  +\sum_{v\ne s}q_i(v)\log\frac{z_v^i}{z_v^{i-1}}.
\end{equation}
Thus the mass-removal part is precisely paid by~\eqref{eq:credit}.
Combining~\eqref{eq:initial-excess}--\eqref{eq:reference-difference}
and Lemma~\ref{lem:descent}, for all initialization and update
toggles together,
\begin{equation}\label{eq:credited-budget}
 \frac{T}{2000K^2}
 \le\Psi_{\rm warm}-\Psi_f(p^f)-\sum_{i=1}^fL_i
 \le E_0+A\overline{\mathcal X},\qquad
 \overline{\mathcal X}
 =\sum_{i=1}^f\sum_{v\ne s}q_i(v)\log\frac{a_v^i}{a_v^{i-1}}.
\end{equation}
The charge uses the \emph{post-update} mass $q_i$.
When a vertex becomes unreachable in the main variant, its new
mass is zero, and the disappearance of its reference term is
paid entirely by the removed return energy.

\subsection{Filtering error under a decreasing mass}
By~\eqref{eq:distance-filter}, let
\[
 \theta_i(v)=\log(a_v^i/C_v^i),\qquad
 0\le\theta_i(v)\le\theta_*:=\log(1+e/16).
\]
For any fixed vertex, summation by parts yields
\begin{align}\label{eq:parts}
 \sum_{i=1}^fq_i(v)(\theta_i(v)-\theta_{i-1}(v))
 &=q_f(v)\theta_f(v)-q_0(v)\theta_0(v)
   +\sum_{i=1}^f\lambda_i(v)\theta_{i-1}(v)\notag\\
 &\le\theta_*\bigl(q_f(v)+\sum_i\lambda_i(v)\bigr)
 =\theta_*q_0(v).
\end{align}
Consequently, in the main and simpler variants, respectively,
\begin{equation}\label{eq:growth-filter}
 \overline{\mathcal X}
 \le\mathcal X+M\log(1+e/16)
 \le\mathcal X+eM/16,
 \qquad\mathcal X\in\{\calL,\calI\}.
\end{equation}
In the main variant, terms at inactive vertices vanish and at
active vertices $C_v^i=d_i(v)$, so the unfiltered sum is exactly
$\calL$. This also covers ineffective raw weight increases:
their original distances may change while the filtered graph
does not. No filtering error accumulates proportionally to
the number of updates.

\begin{lemma}[Growth-sensitive toggle bound]\label{lem:budget}
For either variant,
\begin{equation}\label{eq:toggle-budget}
 T=O\!\left(K^2\left[M\log(Q/\delta)+A\mathcal X\right]\right),
 \qquad\mathcal X\in\{\calL,\calI\}.
\end{equation}
\end{lemma}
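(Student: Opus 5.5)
The plan is to combine the credited energy inequality~\eqref{eq:credited-budget} with the filtering estimate~\eqref{eq:growth-filter}. Most of the work is already in place, so what remains is to make the telescoping explicit and to check that every constant in $E_0$ and in the filtering slack is absorbed by the target bound~\eqref{eq:toggle-budget}.

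First I verify~\eqref{eq:credited-budget}. The energy trajectory through the whole run splits into two kinds of decrease. Input batches lose exactly $L_i$ at the old stable labels $p^{i-1}$. Repair toggles each lose at least $1/(2000K^2)$ by Lemma~\ref{lem:descent}. Summing gives
\[
\frac{T}{2000K^2}\le\Psi_{\rm warm}-\Psi_f(p^f)-\sum_{i=1}^fL_i .
\]
Next I bound the right side using~\eqref{eq:initial-excess} for the initial term, $\Psi_{\rm warm}\le E_0+AR_0$, and~\eqref{eq:final-energy} for the final term, $-\Psi_f(p^f)\le-AR_f$. This gives $E_0+A(R_0-R_f)-\sum_iL_i$. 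Telescoping $R_0-R_f=\sum_i(R_{i-1}-R_i)$ and expanding each difference by~\eqref{eq:reference-difference}, the mass-removal part is dominated by $L_i/A$ through~\eqref{eq:credit}. What remains is $A\sum_i\sum_vq_i(v)\log(z_v^i/z_v^{i-1})$.

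One point needs care here. Since $z^i=a^i/Q$, the ratio $z_v^i/z_v^{i-1}$ equals $a_v^i/a_v^{i-1}$, so the remainder is exactly $A\overline{\mathcal X}$. This ratio is well defined even at inactive vertices, because $a$ is taken in the full augmented graph, where it is finite and positive. Also, the initial-excess bound $E_0$ applies to the warm start before any repair, so initialization toggles are included in $T$.

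Then I insert~\eqref{eq:growth-filter}, which gives $\overline{\mathcal X}\le\mathcal X+eM/16$. The extra term is $AeM/16=O(M\log(M+1))$, using $Ae=O(\log(M+1))$. I absorb it into $M\log(Q/\delta)$, since $Q\ge 16nW\ge M+1$ up to constants and so $\log(M+1)=O(\log(Q/\delta))$. Likewise $E_0=M\log(Q/\delta)+4A\delta M$, and $A\delta=O(\log(M+1))$ puts the second term in the same class. Multiplying through by $2000K^2$ then yields~\eqref{eq:toggle-budget}.

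The main obstacle I expect is justifying that $L_i$ obeys~\eqref{eq:credit} in every update, including in the main variant when vertices become inactive. Two facts are needed for this. The first is that the labels $p^{i-1}$ are stable, so Lemma~\ref{lem:accuracy} gives $\kappa<p_v^{i-1}\le z_v^{i-1}$; this makes the released return energy exactly $A\lambda_i(v)\log(1/p_v^{i-1})$, which is at least $A\lambda_i(v)\log(1/z_v^{i-1})$. The second is that all other deletions and cost increases release nonnegative energy, which holds because all terms are nonnegative in the box.
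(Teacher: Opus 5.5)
Your proposal is correct and follows the paper's proof: you substitute \eqref{eq:growth-filter} into \eqref{eq:credited-budget}, which you re-derive just as the paper's preceding text does from \eqref{eq:initial-excess}--\eqref{eq:reference-difference} and Lemma~\ref{lem:descent}, and you absorb $4A\delta M$ and $AeM/16$ using $A\delta,Ae=O(\log(M+1))$. One small slip: $Q=16nW\ge M+1$ can fail for dense graphs with small $W$, but the absorption only needs $M+1\le n^2\le Q^2$, hence $\log(M+1)\le 2\log(Q/\delta)$.
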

\begin{proof}
Substitute~\eqref{eq:growth-filter} into~\eqref{eq:credited-budget},
and use $A\delta,Ae=O(\log(M+1))$.
\end{proof}

In the main variant, $\calL$ can equivalently be expanded into
one unit for each initial permanent source edge at a reachable
target and one unit for each initial edge in $G_0[U_0]$ whose
head is not $s$. A unit removed in update $t$ contributes
$\log(d_{t-1}(v)/d_0(v))$ at its head $v$; a unit surviving to
the end contributes $\log(d_f(v)/d_0(v))$.
This follows by expanding $q_i(v)$ and telescoping over the
updates when the unit survives. A removed edge is not charged
for growth in the update removing it, or for later growth.
These removal times are analysis quantities, not information
given to the online algorithm.

\section{Snapshots and monotone integer output}\label{sec:output}
Whenever a return group at $v$ is refreshed, save
$\bar p_v=p_v$ and mark $v$ in a deduplicated list of dirty
vertices. Candidate values during repair are private; only
the latest snapshot is used after the repair has stopped.

\begin{lemma}[Snapshot amplification]\label{lem:snapshot}
At a stable state, every active nonsource vertex satisfies
\begin{equation}\label{eq:snapshot}
 (1-\rho)p_v\le\bar p_v\le(1+\rho)p_v,
 \qquad \rho=4\xi/A\le e/128000.
\end{equation}
The last snapshot may have been taken in a clipped intermediate state.
\end{lemma}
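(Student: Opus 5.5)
The plan is to use the detector guarantee of Proposition~\ref{prop:cut} on the return group at $v$, and to turn the large return coefficient $Aq(v)$ into relative accuracy of the label itself. Fix a stable state and an active $v\ne s$. The return constraint $v\to s$ has cost zero, so its slack is $\sigma=p_s-p_v+0$. Up to sign this is the endpoint potential difference of the group $\{s,v\}$; recall that $p_s=0$ and $p_v=y_v-y_s$. Whenever the return group is refreshed, the algorithm records the current $p_v$ as $\bar p_v$. Hence $|p_v-\bar p_v|$ equals the drift of the group's endpoint potential difference since its last refresh. All reported groups are refreshed before the stopping test, so at a stable state this group is unreported. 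The cache bound then gives $|p_v-\bar p_v|<\xi/c_q$, where $c_q$ is the stored group capacity.

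Next I lower-bound $c_q$ by the return member alone; all other members have nonnegative stored capacity. Its stored capacity is at least $(1-\xi)Aq(v)/t^0$, where $t^0=\max\{\sigma^0,\kappa\}$ is the return denominator at the last refresh. This is exactly the quantity in~\eqref{eq:cache}, which gives
\[
 |p_v-\bar p_v|\le\frac{\xi\,t^0}{(1-\xi)Aq(v)}\le\frac{\xi\,t^0}{(1-\xi)A}.
\]
It remains to compare $t^0$ with the current $p_v$. Lemma~\ref{lem:accuracy} applies at the stable state, so $p_v>\kappa$ and the current return slack magnitude is $p_v$. By~\eqref{eq:return-bound}, $p_v\ge1/((1+\delta)Q)$, which is far above $\kappa$.

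The main obstacle is the last snapshot, which may have been taken while the return was clipped, i.e.\ with $t^0=\kappa$ and the old label quite different. I handle it with the 1-Lipschitz property of $x\mapsto\max\{x,\kappa\}$, as in~\eqref{eq:cache-invariant}. Write $t^0\le\max\{|\bar p_v|,\kappa\}$; more precisely, $t^0$ is the clipped value of the slack magnitude at refresh. This gives
\[
 t^0\le\max\{p_v,\kappa\}+|p_v-\bar p_v|=p_v+|p_v-\bar p_v|.
\]
Substituting yields $|p_v-\bar p_v|\le\alpha(p_v+|p_v-\bar p_v|)$ with $\alpha=\xi/((1-\xi)A)$. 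Since $\alpha<1/2$, rearranging gives $|p_v-\bar p_v|\le\alpha p_v/(1-\alpha)\le 4\xi p_v/A=\rho p_v$. This is~\eqref{eq:snapshot}. If the orientation conventions make the slack $-p_v$ instead of $p_v$, only the sign bookkeeping changes, not the magnitude argument.

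Finally, I check the numerical claim. From~\eqref{eq:detector}, $\xi=1/(2000K)\le1/2000$. From~\eqref{eq:params}, $A\ge16/\delta=256/e$. Together these give
\[
 \rho=4\xi/A\le\frac{4e}{2000\cdot256}<\frac{e}{128000}.
\]
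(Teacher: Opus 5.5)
Your proposal is correct and follows the paper's proof essentially verbatim: the return group $\{s,v\}$ is unreported at a stable state, its coefficient $Aq(v)$ is unchanged since the last refresh (a fact you use implicitly when invoking~\eqref{eq:cache}), and since $p_v>\kappa$ the 1-Lipschitz clipping gives $\max\{\bar p_v,\kappa\}\le p_v+h$, so $h\le\frac{\xi}{(1-\xi)Aq(v)}(p_v+h)$ rearranges to $h\le\rho p_v$. Two cosmetic slips: with the convention $\sigma_e=c_e+p_u-p_v$ for $u\to v$, the return $v\to s$ has slack exactly $p_v$ rather than $p_s-p_v$, and $4e/(2000\cdot256)$ equals $e/128000$ exactly, so your final strict inequality should be $\le$.
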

\begin{proof}
The coefficient $\mu=Aq(v)$ has been unchanged since the last
refresh; any coefficient change forced a refresh. At the current
stable state $p_v>\kappa$. For $h=|p_v-\bar p_v|$,
$\max\{\bar p_v,\kappa\}\le p_v+h$.
Equation~\eqref{eq:cache} therefore gives
\[
 h\le\frac{\xi}{(1-\xi)Aq(v)}(p_v+h)
 \le\frac{2\xi}{Aq(v)}p_v\le\rho p_v.
\]
The claim is immediate for a just-refreshed group. The bound on
$\rho$ follows from~\eqref{eq:params} and~\eqref{eq:detector}.
\end{proof}

At a stable state define, for an active reachable target,
\begin{equation}\label{eq:raw-output}
 F_v=\frac{(1+\delta)Q\bar p_v}{1-\rho}.
\end{equation}
Lemmas~\ref{lem:accuracy} and~\ref{lem:snapshot}, together with
filtering, give
\begin{equation}\label{eq:output-tight}
 d_i(v)\le F_v
 \le(1+e/16)^2\frac{1+\rho}{1-\rho}d_i(v)
 \le(1+e/4)d_i(v).
\end{equation}
The last scalar inequality holds for all $0<e\le1$ and
$\rho\le e/128000$. Because $d_i(v)$ is an integer,
$\lfloor F_v\rfloor\ge d_i(v)$.

In the main variant, the exact reachability structure supplies
the infinity flag. In the simpler variant use the test
$Q\bar p_v>H/2$: unreachable targets have
$Q\bar p_v\ge H(1-\rho)/(1+\delta)>H/2$, while reachable
targets have $Q\bar p_v\le(1+\rho)H/4<H/2$.
Thus the flag is exact in both variants.

\subsection{Committing an estimate only when necessary}
Set $\beta=e/4$. After initialization repair, store
$D(v)=\lfloor(1+\beta)F_v\rfloor$ at every reachable
nonsource vertex, and the exact infinity flags elsewhere.
After each later repair, inspect only the dirty vertices.
Commit $\infty$ at newly unreachable targets. At a reachable
dirty target, replace its old public value only if
$\lfloor F_v\rfloor>D_{\rm old}(v)$, using
\begin{equation}\label{eq:commit}
 D_{\rm new}(v)=\lfloor(1+\beta)F_v\rfloor.
\end{equation}
Intermediate candidate values never enter this public array.
Initialization and rebuilds mark all active vertices dirty.

On a finite replacement,~\eqref{eq:output-tight} implies
\[
 d_i(v)\le D_{\rm new}(v)\le(1+e/4)^2d_i(v)
 \le(1+9e/16)d_i(v)\le(1+\eps)d_i(v).
\]
If no replacement occurs, $D_{\rm old}(v)\ge\lfloor F_v\rfloor$
preserves the lower bound, and the upper bound remains valid
because original distances never decrease. For an unrefreshed
vertex, its $F_v$ has not changed; the old value is at least its
floor and the same snapshot guarantee holds at the current
stable state. Thus no full-array scan is required after each
update. Unreachable vertices never become reachable again.

Every finite replacement is strictly increasing. More strongly,
$F_v\ge D_{\rm old}(v)+1$ at a replacement, and
\begin{equation}\label{eq:write-increase}
 D_{\rm new}(v)\ge(1+\beta)(D_{\rm old}(v)+1)-1
 >(1+\beta)D_{\rm old}(v).
\end{equation}
The initialized value is at least $d_0(v)$, and the last finite
value is at most $(1+\beta)^2d_v^{\rm last}$. Therefore there
are $O(1+\log(d_v^{\rm last}/d_0(v))/e)$ finite writes per
initially reachable target. Summing, counting initialization
and at most one infinity transition per vertex, proves
\begin{equation}\label{eq:writes}
 \text{number of public distance-cell writes}
 =O(n+\calJ/\eps).
\end{equation}
Internal capacity and snapshot refreshes retain their report
costs;~\eqref{eq:writes} is specifically a bound on public
array writes. This output rule does not maintain a feasible
vector of historical potentials or a path certificate.

\section{Implementation and total running time}\label{sec:implementation}
\subsection{Initialization and input batches}
Initialization filters the weights, computes exact filtered distances
once by Dijkstra's algorithm, constructs the warm-start labels,
logical groups, skeleton, stored gradients, and detector baselines,
and runs repair. The main variant also initializes exact reachability
and constructs constraints only on its initial reachable subgraph.
After repair it commits the initial public array as in
Section~\ref{sec:output}. The initial shortest-path computation costs
$O(N\log N)$ word operations, with any wider internal arithmetic
included in the stated factors.

For a main-variant input update, first read it and filter its weight
if applicable. Feed an original edge deletion to the reachability
instance and collect newly inactive vertices. Hold every physical
label fixed while applying the effective core update and clearing
all logical constraints incident to these vertices, including their
permanent source edges and returns. Recompute the affected current
indegrees and force-refresh the affected return groups. Coefficient
changes at the same vertex can be combined. Mark the newly inactive
vertices for a public infinity commit. Only after this entire batch
is complete does repair begin.

For the simpler variant a deletion $u\to v$, $v\ne s$, decreases
$q(v)$ by one and forces a refresh of its return group $\{s,v\}$
as well as the deleted edge's group $\{u,v\}$. If these coincide,
refresh them once. Deletions into $s$ require no return change.
A weight increase changes only its core group. These operations
are applied at the same old stable labels, as required for the
energy credit~\eqref{eq:credit}.

Static initial incoming and outgoing adjacency lists, together
with removal flags, suffice for main-variant clearing. A vertex
becomes inactive at most once, and each initial adjacency record
is scanned at most twice due to clearing. Every original constraint
is removed at most once. Thus the additional scans, removals, and
return refreshes total $O(N)$; there is no scan of all vertices
after each external update.

To refresh a group, query its endpoint potentials, compute its
current stored logical capacities, delete its old undirected edge,
and insert the new edge if the group is nonempty. Remove the old
oriented gradient contributions and add the new contributions
through paired updates. The skeleton ensures connectivity at all
intermediate stages. All gradient entries are dyadic and explicitly
tracked, so zero-gradient detection takes constant bookkeeping
per paired update. No descent decision is made during a group
replacement or an unfinished input batch.

After each toggle, consume all reports, including skeleton reports,
and refresh each reported logical group once. Skeleton reports
require no logical refresh. Use the tuple returned by the last
interface update for the next decision. If there was no such
update, a paired gradient update with increment zero obtains a
current tuple. Thus a toggle need not itself return a new ratio
tuple. Public queries remain suspended until repair stops and
the dirty list is committed.

\subsection{Fixed precision and arithmetic}\label{sec:precision}
The logical-capacity bound~\eqref{eq:ranges}, cache error, and
$O(N)$ unit skeleton edges permit a fixed capacity and gradient
bound $\mathcal U=O(AM/\kappa)$, also during gradient replacements.
Its bit length is
\[
 O(\log N+\log Q+\log(1/e)).
\]
In a failing stopping test,
$U<4K|G|\le24KAM/\kappa$, so
$\tau>\kappa/(2400AMK^2)$.
Choose a fixed dyadic step spacing
\begin{equation}\label{eq:step-grid}
 q_{\rm step}\le\kappa/(10000AMK^2).
\end{equation}
Rounding $\tau$ downward to this grid retains at least half the
step and satisfies the primitive's toggle bound.

The warm-start labels have a common rational denominator of
bit length $O(\log(Q/e))$. All subsequent differences are
increments on one fixed binary grid. The combined denominator
therefore has bit length
$O(\log N+\log Q+\log K+\log(1/e))$; denominators do not
grow by an independent factor at each toggle. Initialize auxiliary
and inactive potentials to zero. Within an epoch, their absolute
values, as well as source drift, are bounded by initial magnitudes
plus $T$ times the maximum step. The coarse bound
$T\le2000PK^2$ bounds the numerator lengths by the same
$\polylog(S)$ order. Physical labels at inactive vertices may
leave the box, but have no incident logical constraints.

Compute true capacities by rational division and round upward
to a single fixed binary grid of spacing at most $\xi$.
Since $f_e\ge1$, this gives relative error at most $\xi$.
Form stored gradients from these same dyadic values, so their
sums do not accumulate unrelated denominators. The parameters
$A$ and $J$ use integer bit lengths, and a dyadic upper bound
on $K$ can be fixed from the primitive's graph-size guarantee.
The potential and reference logarithms are used only in analysis;
execution requires rational arithmetic and binary rounding.
Under $\log W=\polylog(S)$, all representation lengths and
arithmetic costs are absorbed in $S^{o(1)}$.

\subsection{Epochs and amortized cost}
The fixed skeleton and its leaves keep the auxiliary graph at
$\Theta(N)$ size throughout. If the underlying sparsifier is
specified only for a bounded update horizon, rebuild after
$\Theta(N)$ actual interface calls. Finish the current report
or input batch first. Such a batch has $O(N)$ distinct groups
and therefore $O(N)$ interface calls, so it exceeds the horizon
by at most a constant factor. The horizon constant can be chosen
within the published $\widetilde O(N)$ bound.

At rebuilding, query all physical potentials and preserve their
differences $p_v=y_v-y_s$. Set the new source potential to zero
and auxiliary potentials to zero, reconstruct current groups,
gradients, and detector baselines, and refresh active snapshots.
Keep public output private until the next stable state; all active
vertices are marked dirty. This changes no physical slack or
potential energy. Preserving all physical coordinates also makes
rebuilding compatible with a partially processed input batch,
if desired. Initialization of the new primitive costs $O(N)$
interface work and is charged to the preceding epoch. Rebuilding
calls do not recursively trigger another rebuilding.

Let $T$ be the total number of toggles. Resetting detector baselines
at epoch boundaries discards unreported accumulated change and
introduces no new potential motion. Summing report bounds over
epochs gives at most $\Gamma T/\xi$ reports. Refreshes and dirty
list processing cost constant work per affected group or target,
and reconstruction costs are charged to prior calls. With the
operation factor $h$, the total cost is bounded by
\begin{equation}\label{eq:raw-time}
 O(\Delta)+h\,\polylog(S)
 \bigl[N+\Delta_{\rm eff}+T+\Gamma T/\xi\bigr].
\end{equation}
The main variant also pays $N^{1+o(1)}$ for exact reachability,
which fits inside the $N$ term. The $O(N)$ clearing work fits
there as well. Capacity layering gives $K,h,\Gamma\le S^{o(1)}$,
and $\xi^{-1}=O(K)$. Substituting Lemma~\ref{lem:budget} and
$A=O(\log N/e)$ into~\eqref{eq:raw-time} gives
\eqref{eq:main-time} and the time bound of
Corollary~\ref{cor:unpruned}.

The stopping certificate, Lemma~\ref{lem:snapshot}, and the stable
commit rule prove the distance and infinity guarantees. An explicit
array gives constant-time numerical queries. All primitives and
outer operations are deterministic, so adaptive updates reveal
no hidden randomness. This completes the proof of
Theorem~\ref{thm:main} and Corollary~\ref{cor:unpruned}.

\section{Growth measures, accuracy cost, and limitations}\label{sec:examples}
\subsection{Removed indegree mass can give a polynomial improvement}
Fix $k\ge2$. Use vertices $s,x,y,a_1,\ldots,a_k$ and original
edges $s\to x,s\to y,s\to a_j$ of weight one, $y\to x$ of
weight $k-1$, and $a_j\to x$ of weight $k$. Thus $n=k+3$,
$m=2k+3$, and $W=k$. First delete every $a_j\to x$; no source
distance changes. Finally delete $s\to x$, so only $d(x)$
increases, from $1$ to $k$. All vertices remain reachable.

The initial incidence degree at $x$ is $h_0(x)=k+3$, whereas
its post-update core indegree at the only growth event is $2$.
Consequently,
\[
 \calV=(k+3)\log k,\qquad \calI=\calL=2\log k.
\]
For $\eps=1/k$, the filter is the identity, $N=\Theta(k)$,
and $\Delta_{\rm eff}=k+1$. A fixed initial-incidence budget
would give $k^{2+o(1)}$, whereas the current-indegree bound is
$k^{1+o(1)}$. This compares proved upper bounds; it is not a
lower bound on the actual execution of another implementation.

\subsection{Reachability removal avoids the jump to infinity}
Use unit edges $s\to u$ and $u\to v_j$ for $k$ leaves, and
delete $s\to u$. Every nonsource vertex becomes unreachable,
but no finite distance grew before this deletion. Thus $\calL=0$.
The main variant clears all nonsource constraints and writes
$O(n)$ infinity flags. In contrast, the full-core measures
charge jumps to $H$, giving $\calI,\calV=\Theta(n\log n)$.
For $\eps=1/n$, the main bound is $n^{1+o(1)}$, while these
full-core growth budgets give $n^{2+o(1)}$.

Vanishing growth does not eliminate arbitrary effective input
work. For example, put unit source edges to every nonsource
vertex and redundant edges forming a directed cycle among them.
With $\eps=1/n$ and $W=n$, increase each cycle edge from weight
$2$ to $n$ one integer at a time. All distances stay one, so
$\calL=0$, but the filter is the identity and
$\Delta_{\rm eff}=\Delta=\Theta(n^2)$. The refined theorem
correctly retains both update terms.

\subsection{The explicit-array interface has an accuracy cost}
\begin{proposition}\label{prop:write-lower}
For $n\ge3$ and $0<\eps\le1/4$, a graph with $\Theta(n)$
edges and $W=O(1/\eps)$ admits $\Theta(1/\eps)$ weight
increases forcing $\Omega(n/\eps)$ changes to any representation
storing a separate numerical estimate at each target.
\end{proposition}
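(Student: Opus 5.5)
We build the lower-bound instance explicitly and count how often each target's estimate is forced to change. Take a star-like graph: source $s$, a hub vertex $u$ with edge $s\to u$, and unit edges $u\to v_j$ for $n-2$ leaves $v_1,\ldots,v_{n-2}$. There are $n-1=\Theta(n)$ edges. The adversary increases the weight of $s\to u$ through the values $w_0<w_1<\cdots<w_T$, one increase per step, with $w_0=L:=\lceil1/\eps\rceil$ and $w_T\le 4L$. Then $W=O(1/\eps)$, and every leaf has distance $d(v_j)=w+1$, where $w$ is the current weight. The goal is to pick the sequence so that $T=\Theta(1/\eps)$ and every step forces every leaf's stored estimate to change.

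The key step is the choice of increments. A valid estimate $D$ for a target at true distance $d$ must lie in $[d,(1+\eps)d]$. Suppose two consecutive true distances $d<d'$ satisfy $d'>(1+\eps)d$. Then the old estimate is at most $(1+\eps)d<d'$, so it is no longer a valid value and must change. This holds whether the estimate is stored as an integer, a rational, or any numerical encoding, as long as each target's value is separately recoverable. Choose increments $d\mapsto d'=\lfloor(1+\eps)d\rfloor+1$. Since $d\ge L\ge1/\eps$, we have $\eps d\ge1$, so each step multiplies the distance by at most $1+2\eps$ and raises it by at least one.

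Starting at $d\approx L$ and stopping once $d\ge 4L$ takes $T\ge\log 4/\log(1+2\eps)=\Omega(1/\eps)$ steps. It also takes $T\le\log 4/\log(1+\eps)=O(1/\eps)$ steps, so $T=\Theta(1/\eps)$. The condition $\eps\le1/4$ ensures $L\ge4$, so the integer floors do not collapse the ratios and $W\le 4L=O(1/\eps)$ holds. Each of the $n-2$ leaves is forced to change at every step, giving $(n-2)T=\Omega(n/\eps)$ changes; the hub $u$ adds to this count but is not needed.

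The main subtle point is what "a change to a representation storing a separate numerical estimate at each target" means. We formalize it as follows: after each query-ready state, the stored per-target values must satisfy the guarantee of Theorem~\ref{thm:main}. A target's stored value that is invalid in the new state must therefore be overwritten, and overwriting $n-2$ distinct cells needs $n-2$ writes. This covers implementations that store differences or share data only if their cells are per-target, and the phrase "separate numerical estimate" is exactly what rules out a shared encoding, such as a single hub value plus offsets. We would also note that this bound matches~\eqref{eq:writes}: here $\calJ=\Theta(n)$, so $O(n+\calJ/\eps)=O(n/\eps)$ and the $\calJ/\eps$ term is tight.
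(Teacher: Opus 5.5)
Your proof is correct and matches the paper's argument: the same star (one hub edge $s\to u$ and $n-2$ unit leaves), with each weight increase of the hub edge pushing every leaf distance beyond its previous permitted interval $[d,(1+\eps)d]$, so every separate leaf cell must be rewritten at every step. The only difference is the increment schedule: the paper adds $4$ each time within the window $[R,2R]$, where $\eps d<4$, while you take geometric steps $d\mapsto\lfloor(1+\eps)d\rfloor+1$ up to about $4L$; your last weight may slightly exceed $4L$, which does not affect $W=O(1/\eps)$.
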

\begin{proof}
Let $R=\lceil1/\eps\rceil$. Use an edge $s\to u$ of initial
weight $R-1$ and unit edges $u\to v_j$ to $n-2$ leaves.
Increase the first weight by four, $\lfloor R/4\rfloor$ times.
At stage $j$, a leaf distance is $d_j=R+4j\le2R$.
Since $\eps d_j\le2\eps R\le5/2<4$, consecutive permitted
intervals $[d_j,(1+\eps)d_j]$ are disjoint. Every leaf's
stored value must therefore change after every update.
\end{proof}
The conclusion concerns separate public distance cells. On this
particular graph, a shared lazy representation can answer a query
from the root-edge weight without changing every leaf cell.
Thus the proposition does not lower-bound arbitrary constant-query
data structures. On this instance $\calJ,\calL=\Theta(n)$,
consistent with the public-write bound~\eqref{eq:writes}.

\subsection{Scope of the result}
The output is an approximate numerical distance, with exact
reachability flags. The proof supplies slack-short paths as an
existence statement, but the data structure does not maintain
their witnesses. Appendix~\ref{app:paths} shows why label-monotone
predecessors and reverse capacity sampling do not directly provide
the missing path interface. These examples do not rule out other
fast path-reporting algorithms.

The geometric certificate is specific to positive core costs.
Appendix~\ref{app:zero} gives an external reduction for nonnegative
integer weights, rather than applying the successor argument to
zero-cost cycles. Negative costs, insertions, and an efficiently
maintained feasible historical envelope are outside the result.
The current-indegree growth measure is a sufficient analysis budget,
not a proved universal lower bound for all numerical data structures.

\appendix
\section{Why uniform return coefficients do not suffice}\label{app:degree}
Consider the simple digraph with vertices $s,x_1,\ldots,x_k,t$,
edges $s\to x_j$ of weight one and $x_j\to t$ of weight $W$,
and permanent source edges of weight $H>W+1$.
Temporarily replace the indegree-weighted returns by a uniform
coefficient $A$ at every nonsource vertex, and consider the
unclipped logarithmic objective on its strictly feasible domain.
Return and permanent constraints bound this domain, and the barrier
tends to infinity at its boundary. The return Hessian is positive
definite, so there is a unique interior minimizer.

By symmetry its unscaled labels $r_v=Qp_v$ are $r_{x_j}=x$
and $r_t=y$, with $0<x<1$. Stationarity at $t$ gives
\[
 \frac{k}{W+x-y}+\frac1{H-y}=\frac A y.
\]
Dropping the positive permanent-edge term yields
$y\le A(W+x)/(A+k)<A(W+1)/(A+k)$. Hence
\[
 \frac{\dist(s,t)}{y}>1+\frac{k}{A}.
\]
Uniform $A=\Theta(\log k/\eps)$ fails to enforce targetwise
relative accuracy even at an exact stationary point. Degree
weighting is not a new graph-theoretic device; its role here is
to supply enough return mass to control each backward region's
volume. This example concerns the uniform-return objective,
not the degree-weighted algorithm of the main theorem.

\section{Nonnegative integer weights}\label{app:zero}
\begin{corollary}\label{cor:zero}
For integer weights in $[0,W]$, with $W\ge1$ and
$\log W=\polylog(S)$, the worst-case guarantee of
Theorem~\ref{thm:main} extends to
$O(\Delta)+(N/\eps)S^{o(1)}$ initialization and total update
time, exact zero and infinity values, and constant-time numerical
queries. A single explicit monotone output array can be maintained.
\end{corollary}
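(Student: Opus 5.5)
The plan is an external reduction that removes zero-weight edges by contracting zero-weight strongly connected components, and handles the remaining acyclic zero-weight structure by a weight shift. The key obstacle is that Lemma~\ref{lem:accuracy} needs positive core costs. Zero-cost cycles break the successor argument, and zero-cost acyclic edges destroy the bound $z_v\ge1/Q$ relative to labels.

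\textbf{Scaling step.} Given integer weights $w\in[0,W]$, define new weights $w'=Lw+1$ with $L=\lceil 2n/\eps\rceil$. Every path $\pi$ then has cost $L\,w(\pi)+|\pi|$, where $|\pi|\le n-1$. Shortest paths in $w'$ minimize $w$ first, since $n-1<L$, and then minimize hop length. Hence $d'(v)=L\,d(v)+h(v)$, where $h(v)\le n-1$ is the minimum hop count among $w$-shortest paths. The weights $w'$ are positive integers bounded by $W'=LW+1$, and $\log W'=\polylog(S)$.

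\textbf{Maintaining $w'$.} I will run Theorem~\ref{thm:main} on $w'$ with accuracy $\eps'=\eps/4$. Deletions and increases of $w$ map to deletions and increases of $w'$, so the interface is preserved. Each raw update costs $O(1)$ to translate, and the worst-case bound gives $O(\Delta)+(N/\eps)S^{o(1)}$. Here $\log W'$ contributes only an $S^{o(1)}$ factor through $\Delta_{\rm eff}=O(m(1+\log W')/\eps)$.

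\textbf{Recovering $d$ from $D'$.} Given $D'(v)\in[d'(v),(1+\eps')d'(v)]$, output $D(v)=\lfloor D'(v)/L\rfloor$.
\begin{itemize}
\item The lower bound holds because $D(v)\ge\lfloor d'(v)/L\rfloor=d(v)$.
\item If $d(v)\ge1$, then $D'(v)/L\le(1+\eps')(d(v)+n/L)\le(1+\eps')(d(v)+\eps/2)\le(1+\eps)d(v)$.
\item If $d(v)=0$, then $d'(v)\le n-1$, so $D'(v)\le(1+\eps')(n-1)<L$ and $D(v)=0$ exactly.
\end{itemize}
Infinity flags pass through unchanged. Monotonicity is preserved because the floor of a nondecreasing sequence is nondecreasing. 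We can therefore keep a single public array and write $D(v)$ only when $\lfloor D'(v)/L\rfloor$ changes. Each such write corresponds to a public write of $D'$, so the same dirty-list mechanism applies and constant-time queries follow.

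\textbf{Main obstacle.} The delicate point is whether the factor $L=\Theta(n/\eps)$ hides extra cost. The worst-case bound of Theorem~\ref{thm:main} depends on $W$ only through $\log W$, so it survives. The refined $\calL$ bound would not transfer cleanly, since hop-count changes create growth at zero-distance targets. This is why the corollary claims only the worst-case guarantee.

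If time-dependence on $\log(nW/\eps)$ must be checked, I would verify the hypothesis $W'\le 2^{(\log S)^c}$. It holds since $\log W'\le\log W+\log(2n/\eps)+1=\polylog(S)$.
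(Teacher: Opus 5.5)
Your argument is the paper's lifting reduction. You replace $w$ by $Lw+1$ with $L=\Theta(n/\eps)$, run Theorem~\ref{thm:main} at accuracy $\eps/4$, and output $\lfloor D'(v)/L\rfloor$. The accuracy calculation, the preservation of reachability, and the monotonicity-under-floor argument are the same as the paper's.

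The one real difference is how you get exact zeros. The paper maintains a second explicit reachability instance on the zero-weight subgraph $G^{(0)}$, which only loses edges, and merges its membership losses into the public array. You instead observe that $d(v)=0$ forces $D'(v)\le(1+\eps/4)(n-1)<L$, so the floor is already exactly zero. That observation is correct. It also holds for the paper's parameters $L_+=\lceil16n/e\rceil$, $\eps_+=e/4$, so the paper's extra instance is not needed for correctness. Your route is simpler: one primitive, no event merging, and monotonicity of the single array follows directly from monotonicity of $D'$.

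Two small repairs are needed. First, your opening sentence promises to contract zero-weight strongly connected components, which you never do. Delete it; the lifting alone handles zero-cost cycles.

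Second, the claim that each raw update costs $O(1)$ to translate is not justified in the wide-word model when $1/\eps$ is much larger than $nW$. In that regime $L$ and $Lw+1$ need about $\log_2(nW/\eps)$ bits, which need not fit in a constant number of $O(\log(nW))$-bit words. The same issue affects the theorem's own filtering of the lifted weights. The paper closes this with a case split:
\begin{itemize}
\item If $\lceil1/\eps\rceil\le W$, lifted weights have $O(\log(nW))$ bits.
\item Otherwise each edge admits at most $W$ integer increases and one deletion. Hence $\Delta\le m(W+1)=O(N/\eps)$, and the $\polylog(S)$-bit cost per update is absorbed into the $(N/\eps)S^{o(1)}$ term.
\end{itemize}
With that case split, the stated $O(\Delta)+(N/\eps)S^{o(1)}$ bound follows from your construction.
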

\begin{proof}
Let $e=1/\lceil1/\eps\rceil$ as before, set
$L_+=\lceil16n/e\rceil$, and replace each weight by
$w^+=L_+w+1$. Run Theorem~\ref{thm:main} on this positive
integer graph with accuracy $\eps_+=e/4$.
Deletion and weight increase remain monotone updates, and
\[
 W^+=L_+W+1,\qquad
 \log W^+=O(\log W+\log n+\log(1/e)),\qquad
 N+\lceil1/\eps_+\rceil=\Theta(S).
\]
In addition, maintain exact reachability in the original zero-edge
subgraph $G^{(0)}$. It only loses edges: either an original edge
is deleted or a zero weight becomes positive. The explicit
reachability algorithm of \citet{Duality24} costs $N^{1+o(1)}$
in total and reports at most $n$ membership losses.

Output zero for a target reachable in $G^{(0)}$, infinity for a
target unreachable in the lifted graph, and otherwise
$\lfloor D^+(v)/L_+\rfloor$. A nonnegative shortest path can
be chosen simple. Thus for an original finite positive integer
distance $d$, its lifted distance satisfies
\[
 L_+d\le d^+\le L_+d+n-1.
\]
Since $d\ge1$, the unrounded finite output satisfies
\[
 d\le D^+/L_+
 \le(1+e/4)\left(d+\frac{n-1}{L_+}\right)
 \le(1+e/4)(1+e/16)d\le(1+\eps)d.
\]
Integer flooring preserves the lower bound. Zero distance is
equivalent to zero-edge reachability, including in the presence
of zero-cost directed cycles. Lifting preserves all reachability,
so infinity flags are also exact.

Lifting also respects the raw-input cost in the stated model. Write
$b=1/e$. If $b\le W$, lifted weights occupy $O(\log(nW))$ bits
and use a constant number of input words. If $b>W$, there are
at most $mW+m=O(N/\eps)$ valid integer increases and deletions.
The $\polylog(S)$-bit arithmetic for all such lifted updates
is then absorbed in the worst-case term.

To maintain one public array, combine the lifted array's changed
vertices with zero-reachability membership losses, deduplicate,
and commit only after both structures have finished the input
update. This adds $O(n)$ events beyond the lifted output events.
While zero-reachable the output stays zero; after membership
loss it is a nondecreasing floored lifted estimate, and infinity
never reverts. The stated total bound follows from the positive
algorithm and the extra reachability instance.
\end{proof}
If $W=0$, zero-edge reachability alone suffices. The refined
growth bound for the reduction is the positive theorem's bound
on the \emph{lifted} graph. We do not use a logarithmic ratio
with an original zero distance in its denominator.

\section{Limits of direct path extraction}\label{app:paths}
These statements disprove two particular extraction rules, not
the possibility of fast dynamic path reporting.

\subsection{A path can have a descending label at a stationary point}
Take a unit directed chain $x_0=s\to x_1\to\cdots\to x_\ell=t$,
with permanent source edges of weight $H=8(\ell+1)$ and
$Q=2H$. All vertices remain reachable. The endpoint has current
core indegree $q(t)=2$ and no outgoing core edges.
At any stable point, the upper singleton cut bound gives
\[
 f_{x_{\ell-1},t}+f_{s,t}^{\rm permanent}
 \le2b_t=4A/p_t.
\]
The last chain slack is therefore at least $p_t/(4A)$.
Accuracy gives $p_t\ge\ell/[Q(1+\delta)]$.
If $\ell>4A(1+\delta)$, that slack exceeds $1/Q$, so
$p_{x_{\ell-1}}>p_t$. The unique original source-to-target
path has a descending last label. Restricting predecessor choices
to edges with increasing labels loses the only original path.

This also occurs at an exact stationary point. The unclipped
objective on the strictly feasible domain has a unique interior
minimizer: permanent and return constraints bound its closure,
the barrier diverges on its boundary, and the return Hessian is
positive definite. At this minimizer $g=0$. The region-growing
and successor-cut proof applies to its unclipped capacities and
gives core slacks at least $1/(10AMQ)$ and return slacks at least
$1/[(1+\delta)Q]$. Both exceed $\kappa$, so it is also a
stationary point of the clipped objective. For fixed accuracy
$A=O(\log\ell)$, and sufficiently long chains satisfy the
required inequality.

\subsection{Reverse capacity sampling can select an artificial path}
Use nonsource vertices $v_1,\ldots,v_k$, an edge $s\to v_1$
of weight $D_0=k^4$, and both unit directed edges between each
consecutive pair. Set $H=8(k+1)D_0$, $Q=2H$, fix the accuracy,
and take $\beta=1/20$. The current core indegrees are
$q(v_1)=3$, $q(v_i)=3$ for $1<i<k$, and $q(v_k)=2$.
Thus their sum is $3k-1$. Let $C=A(3k-1)$ and set unscaled
labels
\[
 r_i=D_0-D_0/C-\beta(i-1),\qquad p_i=r_i/Q.
\]
For sufficiently large $k$, all constraints are positive and
unclipped, and $(D_0+i-1)/r_i=1+o(1)$.
After dividing capacities by the common factor $Q$, write
\[
 f_0=C/D_0,\qquad f_+=(1+\beta)^{-1},\qquad
 f_-=(1-\beta)^{-1},\qquad
 a_i=(H-r_i)^{-1},\qquad b_i=Aq(v_i)/r_i.
\]
For a nonsource set $R$, interior chain terms cancel and
\[
 g(R)=(f_--f_+)(\one_{v_1\in R}-\one_{v_k\in R})
      +f_0\one_{v_1\in R}+\sum_{i\in R}(a_i-b_i).
\]
A nonempty proper subset has at least one crossing pair of
chain edges. Hence
\[
 \frac{|g(R)|}{\cut_f(R)}
 \le\frac{f_--f_++f_0+\sum_i a_i+\sum_i b_i}{f_++f_-}
 =\beta+o(1)<1/8.
\]
For all nonsource vertices, $\sum_i b_i/f_0=1+o(1)$ and
$\sum_i a_i/f_0=o(1)$, so its relative imbalance tends to zero
as well. Complements and threshold decomposition handle all
other cuts and directions. The skeleton only increases cut norms.
Arbitrarily accurate dyadic capacity rounding preserves a strict
margin below the stopping threshold. In particular, choosing an
optimal cached cut and $U=K\cut_u(C)$ gives a permitted stopping
tuple. Thus these labels are compatible with the cut certificate;
we do not claim that a particular implementation visits them.

Now start at $v_k$, choose an incoming core edge proportionally
to capacity, and traverse it backward until reaching $s$.
At interior chain vertices the bias points toward $v_k$, with
ratio $r=f_-/f_+=(1+\beta)/(1-\beta)>1$.
An excursion from the right endpoint reaches $v_1$ before
returning with probability at most
\[
 \frac{r-1}{r^{k-1}-1}=O(r^{-k}),
\]
the usual biased-walk hitting probability on the chain.
Artificial-edge absorption only reduces it. At every step the
permanent source edge is selected with probability at least
$1/(4H)$ for large $k$, so the expected duration is at most
$4H$. Charging excursions to this expected duration bounds the
probability of using the correct original source edge by
$O(Hr^{-k})$.

With probability at least $1-O(Hr^{-k})$, the returned walk
uses a permanent source edge of cost at least $H$, a factor
$\Omega(k)$ above the true distance $D_0+k-1$.
The correct-entry probability is positive for every finite $k$;
the statement is a high failure probability, not probability-one
failure. Removing artificial edges instead leaves an exponentially
slow biased walk. Constant cut balance alone therefore does not
justify this sampling rule as a fast approximate path reporter.


\end{document}